\documentclass[reprint,superscriptaddress,amsmath,amssymb,aps,floatfix]{revtex4-2}

\usepackage{xcolor}
\usepackage{amsthm}
\usepackage{needspace}
\usepackage{physics}
\usepackage{graphicx}
\usepackage{dcolumn}
\usepackage{bm}
\usepackage[colorlinks=true,linkcolor=blue,citecolor=blue,urlcolor=blue]{hyperref}
\usepackage{xcolor}

\newtheorem{definition}{Definition}
\newtheorem{theorem}{Theorem}
\usepackage{weingarten}
\usepackage{placeins}
\begin{document}

\title[Optimal linear-optical interferometers for unitary t-designs]{Optimal linear-optical interferometers for unitary t-designs in a multi-photon Fock space}

\author{R.A. Grinshtein}
\email{grinshtein.ra21@physics.msu.ru}
\affiliation{
Faculty of Physics and Quantum Technologies Centre, M. V. Lomonosov Moscow State University, Leninskie Gory 1, Moscow, Russia, 119991
}%

\author{I.V. Dyakonov}
\affiliation{
Faculty of Physics and Quantum Technologies Centre, M. V. Lomonosov Moscow State University, Leninskie Gory 1, Moscow, Russia, 119991
}%
\affiliation{
Russian Quantum Center, 30 Bolshoy bul’var building 1, Moscow 121205, Russia
}%

\author{S.S. Straupe}
\affiliation{
Faculty of Physics and Quantum Technologies Centre, M. V. Lomonosov Moscow State University, Leninskie Gory 1, Moscow, Russia, 119991
}%
\affiliation{
Russian Quantum Center, 30 Bolshoy bul’var building 1, Moscow 121205, Russia
}%

\date{\today}

\begin{abstract}
We investigate the construction of unitary \(1\)-designs and two-photon designs on programmable linear-optical interferometers. 
Using the frame-potential loss as the optimization objective, we identify architectures that reproduce the relevant Haar moments with considerably fewer programmable phase shifters than required for universal linear optics. 
For unitary \(1\)-designs, we provide an exact construction and prove that the minimum number of programmable phase shifters within the considered layered architecture is \(2(d-1)\), where \(d\) is the number of optical modes. 
Numerical optimization confirms this parameter count over the tested range of dimensions. For two-photon designs, defined by first-moment matching in the symmetric two-photon representation of \(U(d)\), we numerically investigate the resources required to achieve prescribed approximation thresholds. 
We benchmark the resulting ensembles using single- and two-photon transition probabilities and average channel fidelities. Furthermore, we develop a protocol for estimating the total photon number of a multiphoton Fock state using a single monitored output mode and derive the corresponding Haar-averaged photon-count moments. 
Our findings demonstrate that shallow interferometers can reproduce Haar-averaged observables with reduced hardware resources, 
while photon-number fluctuations provide a more stringent test of approximate two-photon designs.
\end{abstract}

\maketitle

\section{Introduction}
Linear-optical platforms remain one of the leading approaches to quantum information processing due to their low noise, scalability, and compatibility with integrated photonic technologies \cite{kok2007linear, o2007optical, bourassa2021blueprint}.
In particular, programmable interferometers play a central role in photonic quantum computing \cite{madsen2022linopt, bouchard2024, slussarenko2019photonic}.
 Tasks such as boson sampling \cite{gard2015introduction, spring2013boson, broome2013photonic} and the modeling of multiphoton random walks \mbox{
\cite{shou2026entanglement, tang2022generating} }\hskip0pt
motivate the study of optical ensembles with controlled random-unitary statistics.
This is because Haar-random unitaries provide a natural notion of uniform randomness on the unitary group.

In practice, however, exact sampling from the Haar measure is difficult to implement efficiently on near-term devices.
Standard methods for synthesizing arbitrary linear-optical transfer matrices (i.e., operations in the mode space), such as the Reck \cite{PhysRevLett.73.58} or Clements \cite{clements2016optimal} architectures, require a quadratic scaling of tunable elements, namely $\mathcal{O}(d^2)$ phase shifters for $d$ modes.
In experimental practice, configuring and maintaining deep optical networks introduces hardware challenges.
Cumulative propagation losses scale with the circuit depth \cite{bao2023, ehrhardt2022}, which exponentially suppresses the coincidence photon registration rates in multi-photon tasks. Furthermore, dense arrays of thermo-optic phase shifters suffer from thermal crosstalk and phase drifts \cite{cavicchioli2026}, necessitating complex, time-consuming multi-parameter calibration loops.
This hardware overhead severely limits the scalability of exact Haar-random operations on integrated photonic chips.
Recent studies have also investigated the complexity of shallow boson-sampling circuits \cite{go2024exploring}.
The statistical properties of such circuits depend on their connectivity and depth. Here, our aim is to match specified low-order Haar moments, rather than to establish the computational hardness of the resulting sampling problem.
It is therefore important to construct robust, finite-depth optical architectures that reproduce Haar-like statistics up to a prescribed order while using significantly fewer physical resources.

Such ensembles are formalized by the notion of a unitary $t$-design. A unitary $t$-design reproduces the first $t$ moments of the Haar measure. An analogous idea appears in classical Euclidean geometry. Consider a polynomial $p_t(\boldsymbol{x}) = \sum_{|\boldsymbol{\alpha}| \leq t} c_{\boldsymbol{\alpha}} \boldsymbol{x}^{\boldsymbol{\alpha}}$, where $\boldsymbol{x} = (x_1, \dots, x_n) \in \mathbb{R}^{n}$ and $\boldsymbol{\alpha} = (\alpha_1, \dots, \alpha_n) \in \mathbb{N}^{n}$, so $\boldsymbol{x}^{\boldsymbol{\alpha}} = x_1^{\alpha_1}x_2^{\alpha_2} \dots x_n^{\alpha_n}$. Instead of integrating this polynomial over the unit sphere $\mathcal{S}^{n-1}$ with respect to the normalized spherical measure $\mu_\mathcal{S}$, one may average it over a finite set of points $\mathcal{K}$, called a \textit{spherical $t$-design}, and obtain the same result for all polynomials of degree at most $t$. In the simplest three-dimensional examples, the vertices of a tetrahedron form a spherical $2$-design, while the vertices of a cube form a spherical $3$-design \cite{colbourn2006handbook}.

The notion of a unitary $t$-design can be defined rigorously as follows \cite{mele2024introduction, gross2007evenly}:

\begin{definition}{\label{def}}
Let $\mathcal{K} = \{U_i\}$ be an ensemble of unitary matrices $U_i \in U(d)$. We say that $\mathcal{K}$ forms a unitary $t$-design if, for every polynomial $p_{t,t}(U,U^\dagger)$ of degree at most $t$ in the matrix elements of $U$ and at most $t$ in the matrix elements of $U^\dagger$, the average over $\mathcal{K}$ coincides with the average over the Haar measure on $U(d)$:
\begin{equation}{\label{eq:1}}
 \frac{1}{|\mathcal{K}|}\sum_i p_{t,t}(U_i,U_i^\dagger) = \int_{U(d)} p_{t,t}(U,U^\dagger)\, d\mu_{\mathrm{H}}(U).
\end{equation}
Equivalently, an ensemble $\mathcal{K}$ is a unitary $t$-design if its $t$-th moment operator matches that of the Haar measure:
\begin{equation}{\label{eq:2}}
 \frac{1}{|\mathcal{K}|}\sum_i U_i^{\otimes t} \otimes (U_i^\ast)^{\otimes t} = \int_{U(d)} U^{\otimes t} \otimes (U^\ast)^{\otimes t}\, d\mu_{\mathrm{H}}(U).
\end{equation}
\end{definition}

Analogous work is being conducted in the realm of qubit-based quantum computing, where unitary transformations must be synthesized directly within the multi-qubit state space.
In this context, unitary $t$-designs in $n$-qubit systems with $d = 2^n$ have been studied extensively.
 In particular, the $n$-qubit Pauli group $\mathcal{P}_n = \{\mathbb{I}, X, Y,Z\}^{\otimes n}$ forms a unitary $1$-design. The Clifford group is the normalizer of the Pauli group,
\begin{equation}
\mathrm{Cl}(n)=\{U\in U(2^n):U\mathcal P_nU^\dagger=\mathcal P_n\},
\end{equation}
where $\mathcal P_n$ includes the scalar phases required for group closure. The qubit Clifford group forms a unitary $t$-design for $t\leq3$ \mbox{
\cite{webb2015clifford}}\hskip0pt
.
Implementing a logical multiqubit Clifford circuit is a different control problem from programming a mode-space interferometer; in particular, passive linear optics alone does not provide deterministic entangling gates for dual-rail photonic qubits without additional resources \cite{kok2007linear, PhysRevA.68.032310}.

\begin{figure*}[t]
 \centering
 \includegraphics[width=\textwidth]{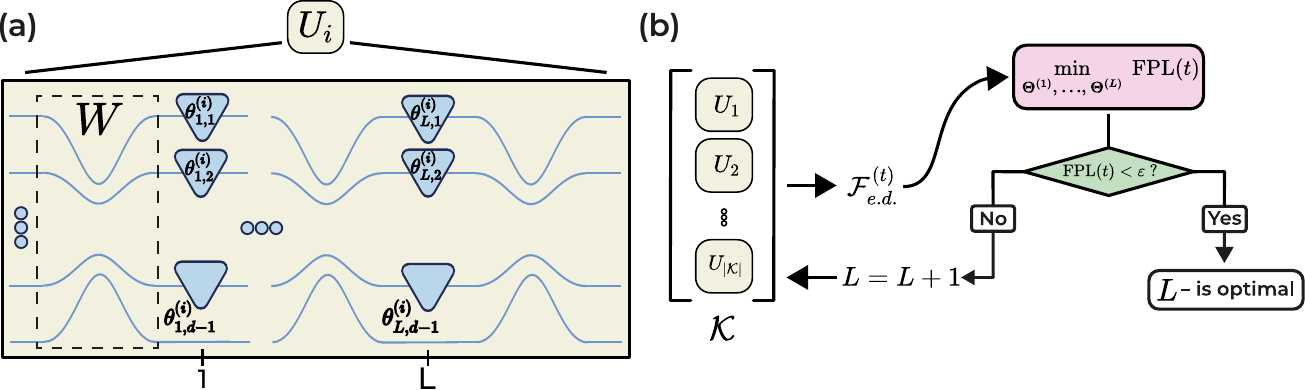}
\caption{(a) Schematic of the programmable linear-optical interferometer used to generate matrices from the ensemble $\mathcal{K}$. Static mode-mixing layers $W$ alternate with programmable phase-shifter layers $\Phi_l(\boldsymbol{\theta}_l)$, and each choice of phases $\boldsymbol{\Theta}^{(i)}$ produces a unitary matrix $U_i$. (b) Optimization workflow for constructing an $\varepsilon$-approximate unitary $t$-design. For a fixed number of programmable layers $L$, the phase settings for each matrix from an ensemble are optimized by minimizing the frame potential loss, and the procedure is repeated until the target design condition is satisfied.}
 \label{fig:1}
\end{figure*}

Based on the previous paragraph, we distinguish the mode space from each fixed-photon-number sector. A passive $d$-mode interferometer is specified by $U\in U(d)$. Its action on $n$ indistinguishable photons is the symmetric-power representation
\begin{equation}
R_n(U)=\left.U^{\otimes n}\right|_{\mathrm{Sym}^n(\mathbb C^d)},
\qquad D_n=\binom{n+d-1}{n}.
\end{equation}
Matching moments of $U$ on $U(d)$ and matching moments of $R_n(U)$ on $U(D_n)$ are distinct requirements. In this work, the one- and two-photon objectives seek a unitary $1$-design of $R_1(U)$ and $R_2(U)$, respectively. We use ``two-photon design'' as an operational label for the second requirement, not as a synonym for a full unitary $2$-design.
For $n,d\ge2$, passive linear optics can form a unitary $1$-design on the fixed-$n$ sector, but cannot form an exact unitary $2$-design on the full group $U(D_n)$ \mbox{
\cite{saied2024advancing}}\hskip0pt
.

For programmable linear-optical interferometers \cite{PhysRevLett.73.58, clements2016optimal, saygin2020robust}, the complexity of implementing a given ensemble in multi-photon Fock space is strongly constrained by hardware resources. In particular, the number of programmable phase shifters determines the effective dimension of the accessible family of interferometers and directly affects fabrication complexity, control overhead, and robustness to noise. For example, constructing an arbitrary unitary $U \in SU(d)$ requires $d^2-1$ real parameters. This naturally raises the question of how many controllable parameters are actually necessary to realize an ensemble that still behaves as a unitary $t$-design for defined system settings: number of modes $d$ and photons $n$.

In this paper, we study the construction of one-/two-photon designs using programmable linear-optical interferometers with a reduced number of tunable phase shifters. We formulate the search as frame-potential optimization for ensembles of interferometers with fixed mode-mixing layers. For one-photon designs, we give an exact construction and prove that two complete programmable phase layers, containing $2(d-1)$ independent phases per interferometer, are necessary and sufficient within the considered architecture. For two-photon designs, we numerically investigate the number of programmable phases needed to reach prescribed frame-potential thresholds. We assess the resulting ensembles through transition probabilities, average channel fidelities, and photon-number estimation.

The paper is organized as follows. In \hyperref[sec:problem]{Sec.~II}, we formulate the problem and introduce the optimization criterion.
In \hyperref[sec:optimal]{Sec.~III}, we present one-/two-photon designs and numerical results.
In \hyperref[sec:benchmark]{Sec.~IV}, we benchmark the proposed constructions on representative photonic tasks.
Finally, in \hyperref[sec:discussion]{Sec.~V}, we summarize the results and discuss future directions.

\section{Problem formulation}\label{sec:problem}
In this work, we focus on the programmable interferometer architecture introduced in \cite{saygin2020robust}, characterized by alternating static mode-mixing layers and programmable phase shifters.
We employ this architecture due to its inherent robustness to fabrication imperfections and its tractable structure for numerical optimization.
 We restrict attention to architectures with identical fixed mixing layers.
Consequently, the unitary matrix $U \in U(d)$ parameterized by an interferometer with $L$ programmable phase-shifter layers is given by
\begin{equation}\label{eq:3}
 U\left(\boldsymbol{\Theta}\right) = W\Phi_{L}(\boldsymbol{\theta}_L)W\dots W\Phi_1(\boldsymbol{\theta}_1)W,
\end{equation}
where $\Phi_l(\boldsymbol{\theta}_l) = \mathrm{diag}\left(e^{i\theta_{l,1}}, \dots, e^{i\theta_{l,d-1}}, 1\right)$ for $l = 1, \dots, L$, $\boldsymbol{\Theta} = \left(\boldsymbol{\theta}_1, \dots, \boldsymbol{\theta}_L\right)$, and the fixed mixing matrix $W$ is chosen to be the discrete Fourier transform (DFT) matrix because it provides balanced mode mixing. Repeated application of the same balanced DFT matrix, interleaved with programmable phase shifts, is mathematically sufficient to span the required unitary space. It is worth noting that the matrix $W$ can be chosen to be any other unitary matrix that exhibits mode-mixing properties.

For each unitary matrix $U_i$ in a unitary $t$-design ensemble $\mathcal{K}$, we denote by $\boldsymbol{\Theta}^{(i)}$ the set of phase-shifter values required to realize $U_i$ with the interferometer. The corresponding architecture is illustrated in Fig.~\ref{fig:1}(a). To quantify how close an ensemble is to an exact unitary $t$-design, we use the frame potential \cite{mele2024introduction}. For the Haar measure, it is given by
\begin{multline}\label{eq:4}
 \mathcal{F}_{\mu_H}^{(t)} = \mathrm{Tr} \left[\int_{U(d)} U^{\otimes t} \otimes (U^\ast)^{\otimes t}\, d\mu_{\mathrm{H}}(U)\right] = \\
 \int_{U(d)} \left|\mathrm{Tr}(U)\right|^{2t}\, d\mu_{\mathrm{H}}(U),
\end{multline}
For an arbitrary equally weighted finite unitary ensemble, the frame potential is
\begin{equation}\label{eq:5}
 \mathcal{F}^{(t)}_{\mathrm{e.d.}} = \frac{1}{|\mathcal{K}|^2} \sum_{i,j = 1}^{|\mathcal{K}|} \left|\mathrm{Tr}\left(U_i U_j^{\dagger}\right)\right|^{2t}.
\end{equation}

The frame potential satisfies several useful properties. In particular, for $t \leq d$ one has $\mathcal{F}^{(t)}_{\mathrm{e.d.}} \geq \mathcal{F}^{(t)}_{\mu_H} = t!$ \cite{mele2024introduction}. This motivates our \textit{frame potential loss} (FPL), defined as
\begin{equation}\label{eq:6}
 \mathrm{FPL}(t) = \left(\mathcal{F}^{(t)}_{\mathrm{e.d.}} - \mathcal{F}^{(t)}_{\mu_H}\right)^2,
\end{equation}
which we require to be bounded by a prescribed threshold $\varepsilon$, i.e., $\mathrm{FPL}(t) < \varepsilon$. If this condition is satisfied, the ensemble $\mathcal{K}$ is treated as an $\varepsilon$-approximate unitary $t$-design.

Certain conditions can be obtained analytically, providing a foundation upon which the construction of the investigated structures can proceed.
Lower bounds and exact minimal sizes of unitary designs are known for several important cases \cite{gross2007evenly, roy2009unitary}. For example, a unitary $1$-design requires at least $|\mathcal{K}| = d^2$ elements, whereas a unitary $2$-design requires at least $|\mathcal{K}| = d^4 - 2d^2 + 2$ elements.
These results remove the ambiguity in the required ensemble size for the low-order designs considered in this work.
The first is attainable in every dimension; the second is a lower bound and does not by itself guarantee a design of that cardinality.

Our goal is therefore to determine the minimal number of programmable interferometer parameters in Eq.~\eqref{eq:3} for which the generated ensemble $\mathcal{K}$ becomes an $\varepsilon$-approximate unitary $t$-design.

The optimization workflow is shown in Fig.~\ref{fig:1}(b).
Due to the intrinsic geometry of the unitary group and the nonlinear dependence on the optical phases, the optimization landscape of the FPL($t$) is highly non-convex.
To effectively navigate this complex parameter space, we employ a hybrid two-stage optimization strategy.
 We first use Adam \mbox{
\cite{kingma2014adam}}\hskip0pt
, then refine the phases with L-BFGS when the loss plateaus. Analytical gradient expressions are given in Appendix~\ref{app:grad}.

\begin{figure}[ht!]
 \centering
 \includegraphics[width=\linewidth]{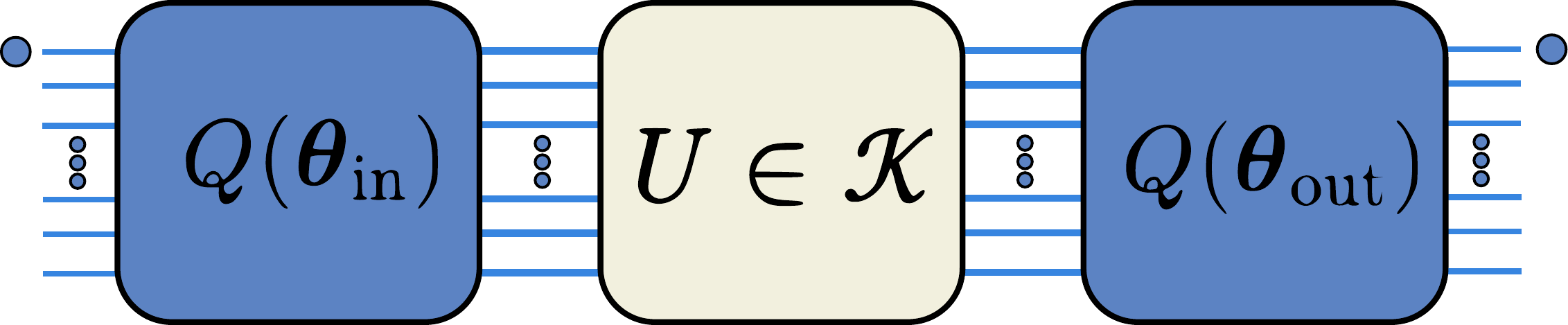}
 \caption{Schematic of the physical test for a 1-design. A single photon sequentially passes through a preparation interferometer $Q(\boldsymbol{\theta}_{\text{in}})$, the unitary operation under test $U \in \mathcal{K}$ from $1$-design set, and a measurement interferometer $Q(\boldsymbol{\theta}_{\text{out}})$.}
 \label{fig:2}
\end{figure}

\section{Resource-efficient interferometers}\label{sec:optimal}
\subsection{Optimal 1-design}
For a mode-space unitary $1$-design, the loss is
 \begin{equation}{\label{eq:7}}
 \mathrm{FPL}(1) = \left(\mathcal{F}^{(1)}_{\mathrm{e.d.}} - 1\right)^2.
\end{equation}

\begin{figure*}[t]
 \centering
 \includegraphics[width=0.85\textwidth]{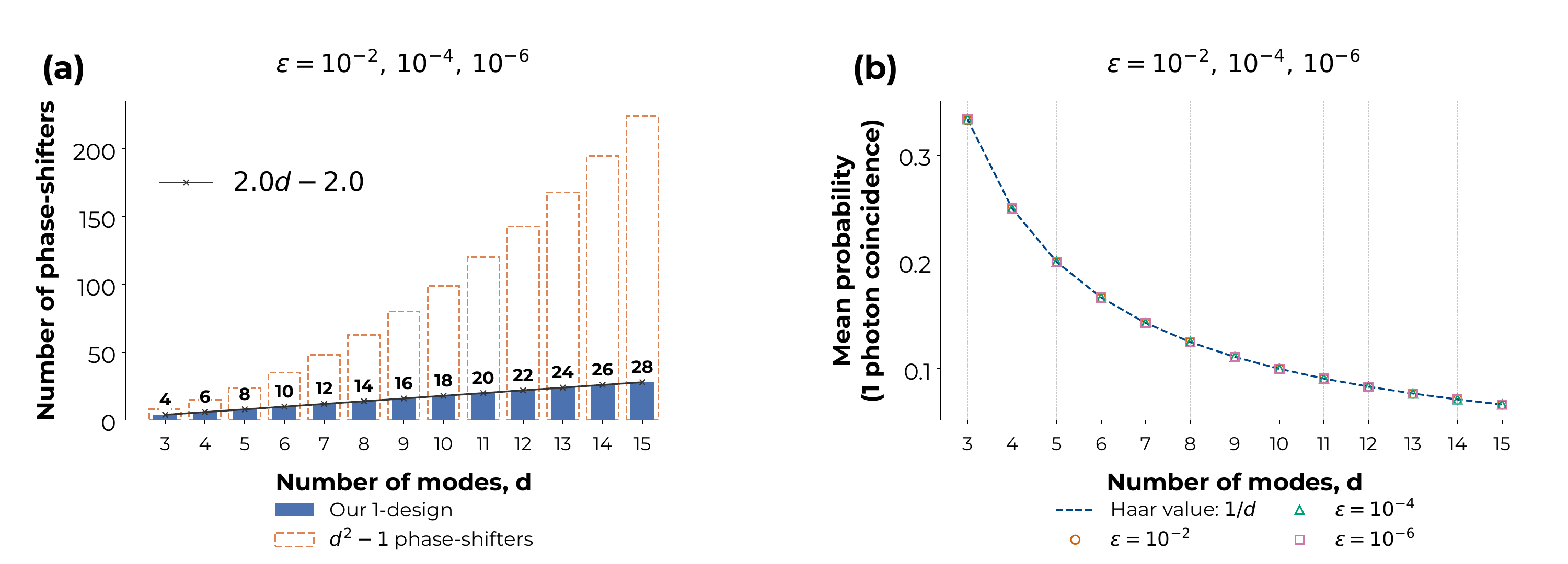}
\caption{Optimal programmable interferometers obtained for $\varepsilon$-approximate unitary $1$-designs at three accuracy thresholds: $\varepsilon = 10^{-2}, 10^{-4}, 10^{-6}$, for $d = 3, \dots, 15$. (a) Number of programmable phase shifters; all three thresholds give $2(d-1)$. (b) Corresponding physical-simulation estimates of the mean transition probability for all three thresholds, compared with the Haar value $1/d$.}
 \label{fig:3}
\end{figure*}

Taking the ensemble size to be $|\mathcal{K}| = d^2$, we perform simulations for several values of $\varepsilon$: $10^{-2}$, $10^{-4}$, and $10^{-6}$.
These values were chosen because they correspond to relative errors of 10\%, 1\%, and 0.1\% with respect to $\mathcal{F}^{(1)}_{\mu_H} = 1$.
We use the single-photon preparation-and-measurement test of Ref.~\cite{matthews2015testing} to assess the recovered first moments.
Suppose a photon is injected into the first mode, and we measure the probability of detecting it in the same mode, averaged over all unitary matrices in the ensemble.
The corresponding transition probability between single-photon Fock states is $P(|1_1, 0, \dots, 0_d\rangle \rightarrow |1_1, 0, \dots, 0_d\rangle) = |U_{11}|^{2}$.
The corresponding Haar average is $\mathbb E_{\mu_H}[|U_{11}|^2]=1/d$, as follows from the first-moment formula in Appendix~\ref{app:wein}. This expresses uniform average occupation of the output modes.

In such an experimental setup, one may place a preparation interferometer $Q(\boldsymbol{\theta}_{\text{in}})$ before $U$ and a measurement interferometer $Q(\boldsymbol{\theta}_{\text{out}})$ after it (see Fig.~\ref{fig:2}).
By tuning the configuration $\boldsymbol{\theta}_{\text{in}}$, an initial single-photon state is prepared in a specific superposition of spatial modes.
Similarly, tuning $\boldsymbol{\theta}_{\text{out}}$ effectively rotates the measurement basis relative to the fixed single-photon detectors.
For a given configuration of $Q(\boldsymbol{\theta}_{\text{in}})$ and $Q(\boldsymbol{\theta}_{\text{out}})$, the measured transition probability corresponds to a specific polynomial -- a linear combination of basis monomials in the matrix elements of $U$ (e.g., terms like $U_{ij} U_{kl}^*$).
To isolate these individual monomials, one cannot rely on a single measurement setting.
Instead, following the approach of Ref.~\cite{matthews2015testing}, one probes the system using a carefully chosen, linearly independent set of configurations.
Measuring the transition probabilities for each configuration in this set generates a full-rank system of linear equations.
By inverting this system, one can algebraically extract the average of any individual monomial up to a desired degree.
For a valid $t$-design, the ensemble average of these extracted monomials over $\mathcal{K}$ must coincide with the theoretical Haar average.
The numerical results for the obtained optimal $\varepsilon$-approximate designs, together with the corresponding physical simulation, are shown in Fig.~\ref{fig:3}.

For every tested dimension $d=3,\ldots,15$, the optimization converges to a two-layer architecture with $2(d-1)$ programmable phase shifters for each of the thresholds $\varepsilon=10^{-2},10^{-4},10^{-6}$, as shown in Fig.~\ref{fig:3}(a).

The mean transition probabilities in Fig.~\ref{fig:3}(b) reproduce the Haar value $1/d$ with negligible residual deviations.

\begin{figure}[ht!]
 \centering
 \includegraphics[width=0.4\textwidth]{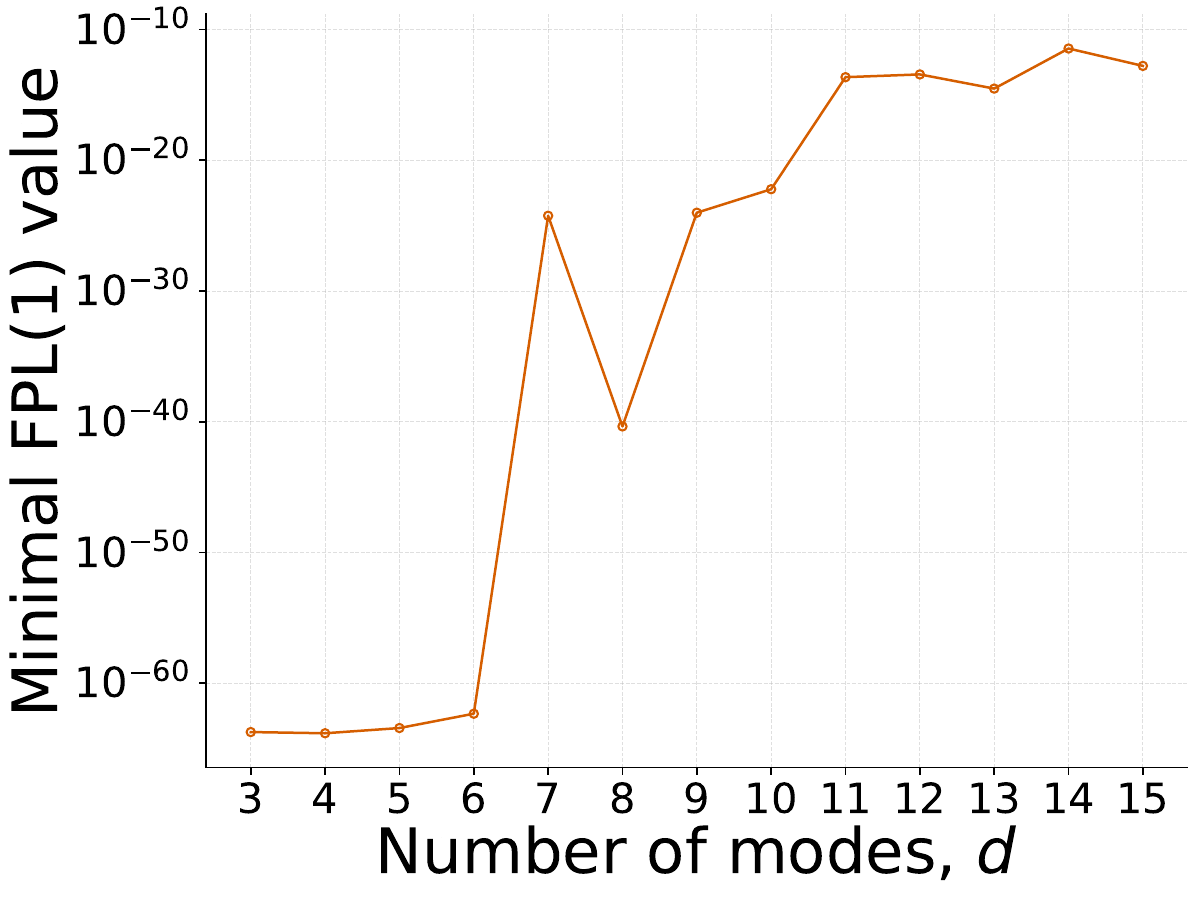}
\caption{Smallest observed frame-potential losses found over $100$ random optimization runs with $L=2$ and $|\mathcal K|=d^2$ for $d=3,\ldots,15$. For $d=13,14,15$, additional basin-hopping continuations are included.}
 \label{fig:4}
\end{figure}

The best-found losses shown in Fig.~\ref{fig:4} confirm that $L=2$ meets the target accuracy throughout the tested range.

In fact, an exact unitary $1$-design can be constructed with $2(d-1)$ programmable phase shifters. This number is minimal within the considered architecture, as established by the theorem below.

\begin{theorem}[Optimal phase-shifter count for a unitary $1$-design]\label{thm:one-design-phases}
    Let $d\ge2$, and let $W\in U(d)$ be a fixed mode-mixing matrix satisfying $|W_{jk}|^2=1/d$ for all $j,k=1,\ldots,d$. In the architecture of Eq.~\eqref{eq:3}, with complete programmable phase layers and one reference phase fixed per layer, the minimum number of programmable phase shifters needed to realize an exact unitary $1$-design is
    \begin{equation}\label{eq:one-design-pmin}
    p_{\text{min}}=2(d-1).
\end{equation}
An ensemble of the minimum cardinality $|\mathcal K|=d^2$ is sufficient. Moreover, within this two-layer layout, fixing any additional individual mode phase prevents the realization of an exact unitary $1$-design, regardless of the ensemble size or the values assigned to the remaining phases.
\end{theorem}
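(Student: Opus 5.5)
We prove sufficiency by an explicit construction and necessity by a linear-span (rank) argument. Both rest on one standard reformulation of the first moment. For the Haar measure, $\int U\otimes U^\ast\,d\mu_{\mathrm H}=\frac1d|\Omega\rangle\langle\Omega|$ with $|\Omega\rangle=\sum_j|jj\rangle$. Hence Eq.~\eqref{eq:2} with $t=1$ gives
\begin{equation*}
\frac{1}{|\mathcal K|}\sum_i\left|\mathrm{Tr}(A^\dagger U_i)\right|^2=\frac{1}{d}\,\mathrm{Tr}(A^\dagger A)\qquad\text{for all } A\in M_d(\mathbb C).
\end{equation*}
Two consequences follow.
\begin{itemize}
\item \emph{Spanning.} The elements of a $1$-design span $M_d(\mathbb C)$. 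Otherwise some $A\neq0$ is Hilbert--Schmidt orthogonal to every $U_i$, and the left side vanishes while the right side does not.
\item \emph{Minimum-size criterion.} For $|\mathcal K|=d^2$, the ensemble is a $1$-design exactly when $\mathrm{Tr}(U_i^\dagger U_j)=d\,\delta_{ij}$. In that case $\{U_i/\sqrt d\}$ is an orthonormal basis, and the identity above is Parseval's identity. Conversely, the Haar-side value $d^2$ of the first-moment frame potential forces the off-diagonal overlaps to vanish.
\end{itemize}

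\textbf{Sufficiency.} Take $L=2$, so $U_{a,b}=W\Phi_2 W\Phi_1 W$. Let $\Phi_1$ and $\Phi_2$ range over the character diagonals $D_c=\mathrm{diag}(\omega^{ck})_{k=1}^{d}$ with $\omega=e^{2\pi i/d}$. Each $D_c$ is multiplied by the global phase $\omega^{-cd}$ so that its last entry equals $1$; global phases do not affect Eq.~\eqref{eq:2}. The ensemble is $\{U_{a,b}\}_{a,b=0}^{d-1}$, with $d^2$ elements and $2(d-1)$ free phases per interferometer. Up to phases,
\begin{equation*}
\mathrm{Tr}(U_{a,b}^\dagger U_{a',b'})=\mathrm{Tr}\!\left(D_{b'-b}\,W D_{a'-a}W^\dagger\right).
\end{equation*}
The diagonal entries of $WD_cW^\dagger$ are $\sum_k|W_{jk}|^2\omega^{ck}=\frac1d\sum_k\omega^{ck}=\delta_{c0}$. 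This is the only place the flatness hypothesis $|W_{jk}|^2=1/d$ enters.
\begin{itemize}
\item If $a'\neq a$, the matrix $WD_{a'-a}W^\dagger$ has zero diagonal, so its trace against a diagonal matrix vanishes.
\item If $a'=a$, the trace reduces to $\mathrm{Tr}(D_{b'-b})=d\,\delta_{bb'}$.
\end{itemize}
By the minimum-size criterion, this is an exact $1$-design of the minimal cardinality $d^2$.

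\textbf{Necessity.} Complete layers contribute $d-1$ phases each, so any layout with fewer than $2(d-1)$ phases has $L\le1$. For $L=1$, every $U=W\Phi W$ lies in $W\,\mathcal D\,W$, where $\mathcal D$ is the diagonal algebra. This subspace has dimension $d<d^2$, contradicting the spanning property.

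For the final claim, suppose $L=2$ and one additional mode phase in a layer is fixed, say entry $m$ of $\Phi_1$ is fixed to $e^{i\alpha}$. Every admissible $\Phi_1$ then satisfies the linear relation $(\Phi_1)_{mm}=e^{i\alpha}(\Phi_1)_{dd}$. So $\Phi_1$ ranges over a $(d-1)$-dimensional subspace $\mathcal D'\subset\mathcal D$. By multilinearity, every $U$ lies in $\mathrm{span}\{W E_{jj} W F W : F\in\mathcal D'\}$, which has dimension at most $d(d-1)<d^2$. The same argument applies if the extra fixed phase is in $\Phi_2$. Since this bound holds for any ensemble size and any values of the free phases, no exact $1$-design is possible.

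\textbf{Main obstacle.} The delicate step is making the reduction from Eq.~\eqref{eq:2} to the spanning and orthogonality criteria fully rigorous. This is the identification of the Haar first moment with $\frac1d|\Omega\rangle\langle\Omega|$ and the equality case of the frame-potential bound from Appendix~\ref{app:wein}, and it carries the whole necessity argument. A second point needs care: the phase normalization $(\Phi_l)_{dd}=1$ introduces global phases, and we must check that these are harmless, since Eq.~\eqref{eq:2} is insensitive to a global phase on each $U_i$.
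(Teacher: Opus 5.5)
Your proof is correct. Its sufficiency half is essentially the paper's, but its necessity half takes a genuinely different route.

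\textbf{Sufficiency.} The paper uses the same ensemble: $h^{(a)}_k=e^{2\pi i(a-1)(k-d)/d}=\omega^{(a-1)k}$ in both layers. It obtains $\operatorname{Tr}(U_{ab}^\dagger U_{a'b'})=d\,\delta_{aa'}\delta_{bb'}$ entrywise through $V_{ab}=\Phi_2W\Phi_1$, whereas you read it off the diagonal of $WD_cW^\dagger$. The paper then concludes from $\mathcal F^{(1)}_{\mathrm{e.d.}}=1$. Your Parseval step checks the moment condition directly, so it does not need the equality case of the frame-potential bound.

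\textbf{Necessity.} The paper exhibits explicit moments that fail. For one layer, $W^\dagger U_iW^\dagger=\Phi_i$ would have to be a $1$-design, but $|(\Phi_i)_{rk}|^2=0\neq1/d$ for $r\neq k$. For a pinned phase $\theta_{1,j}=c$, the product $V_{rj}\overline{V_{rd}}=e^{ic}W_{rj}\overline{W_{rd}}$ does not depend on any free phase and has modulus $1/d$. Its mean therefore cannot vanish as the first moment requires; the second layer is handled the same way using a column. You instead use that a $1$-design must span $M_d(\mathbb C)$. You then bound the dimension of the linear span of all reachable $U$ by $d$, respectively by $d(d-1)$ via bilinearity in $(\Phi_2,\Phi_1)$.

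\textbf{What each route buys.} Your rank argument never uses $|W_{jk}|^2=1/d$, so the lower bound holds for an arbitrary fixed mixing matrix. The paper's pinned-phase step, by contrast, needs $W_{rj}\overline{W_{rd}}\neq0$. Your argument also shows why two layers with $d^2$ elements are exactly tight, since $d\cdot d=d^2$. The paper's route names a concrete monomial whose ensemble average would visibly fail, which fits the transition-probability tests used elsewhere in the paper.

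\textbf{Small points.} The factor $\omega^{-cd}$ does nothing, since $\omega^{d}=1$. The worry in your closing paragraph is unnecessary. Necessity rests only on the spanning property, which follows directly from $\int U\otimes U^\ast\,d\mu_{\mathrm H}=\frac1d|\Omega\rangle\langle\Omega|$. The frame-potential equality case enters only the converse of your minimum-size criterion, which you never use. The case $L=0$ (a single fixed $W$) is equally trivial and is also left implicit in the paper.
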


\begin{proof}
We first show that $2(d-1)$ programmable phase shifters are sufficient. Consider the ensemble $\mathcal K=\{U_{ab}\}_{a,b=1}^d$, where
\begin{equation}\label{eq:one-design-uab}
U_{ab}=W\Phi_2(\boldsymbol\theta^{(b)})W\Phi_1(\boldsymbol\theta^{(a)})W,
\qquad a,b=1,\ldots,d.
\end{equation}
Thus the ensemble contains $d^2$ interferometers. With the reference phase fixed to zero, write each phase setting as
\begin{equation}\label{eq:one-design-phase-row}
\boldsymbol\theta^{(a)}=
\left(\theta^{(a)}_1,\ldots,\theta^{(a)}_{d-1},0\right).
\end{equation}
For this construction, collect the phase settings into the matrix
\begin{equation}\label{eq:one-design-theta-matrix}
\boldsymbol\Theta=
\begin{pmatrix}
    \boldsymbol\theta^{(1)}\\
    \boldsymbol\theta^{(2)}\\
    \vdots\\
    \boldsymbol\theta^{(d)}
\end{pmatrix}
=
\begin{pmatrix}
    \theta^{(1)}_1&\cdots&\theta^{(1)}_{d-1}&0\\
    \theta^{(2)}_1&\cdots&\theta^{(2)}_{d-1}&0\\
    \vdots&&\vdots&\vdots\\
    \theta^{(d)}_1&\cdots&\theta^{(d)}_{d-1}&0
\end{pmatrix}.
\end{equation}
Taking the exponential $e^{i(\cdot)}$ of each entry gives
\begin{equation}\label{eq:one-design-h-matrix}
    H=
    \begin{pmatrix}
        e^{i\theta^{(1)}_1}&\cdots&e^{i\theta^{(1)}_{d-1}}&1\\
        e^{i\theta^{(2)}_1}&\cdots&e^{i\theta^{(2)}_{d-1}}&1\\
        \vdots&&\vdots&\vdots\\
        e^{i\theta^{(d)}_1}&\cdots&e^{i\theta^{(d)}_{d-1}}&1
    \end{pmatrix}
    =
    \begin{pmatrix}
        \boldsymbol h^{(1)}\\
        \boldsymbol h^{(2)}\\
        \vdots\\
        \boldsymbol h^{(d)}
    \end{pmatrix}.
\end{equation}
The phase layers are obtained by placing a row $\boldsymbol h^{(a)}$ on the diagonal:
\begin{equation}
    \Phi_l(\boldsymbol\theta^{(a)})=
    \operatorname{diag}\left(h^{(a)}_1,\ldots,h^{(a)}_d\right),
    \qquad l=1,2.
\end{equation}
Choose the rows to be orthogonal, so that
\begin{equation}\label{eq:one-design-hadamard}
    \left\langle\boldsymbol h^{(a)},\boldsymbol h^{(a')}\right\rangle
    =\sum_{k=1}^d\overline{h^{(a)}_k}h^{(a')}_k
    =d\delta_{aa'},\qquad HH^\dagger=d\mathbb I_d.
\end{equation}
For example, for $a,k=1,\ldots,d$, such rows are obtained from
\begin{equation}\label{eq:one-design-phase-basis}
    \theta^{(a)}_k=\frac{2\pi}{d}(a-1)(k-d),\qquad
    h^{(a)}_k=e^{i\theta^{(a)}_k}.
\end{equation}
Indeed, $\theta^{(a)}_d=0$, $h^{(a)}_d=1$, and
\begin{equation}
    \sum_{k=1}^d\overline{h^{(a)}_k}h^{(a')}_k
    =\sum_{k=1}^d\exp\!\left[\frac{2\pi i}{d}(a'-a)(k-d)\right]
    =d\delta_{aa'}.
\end{equation}

Set $V_{ab}=\Phi_2(\boldsymbol\theta^{(b)})W\Phi_1(\boldsymbol\theta^{(a)})$, so that $U_{ab}=WV_{ab}W$. By unitarity of $W$ and cyclicity of the trace,
\begin{multline}
    \operatorname{Tr}(U_{ab}^\dagger U_{a'b'})=
    \operatorname{Tr}\left(W^\dagger V_{ab}^\dagger W^\dagger W V_{a'b'}W\right)=\\
    \operatorname{Tr}\left(V_{ab}^\dagger V_{a'b'}\right).
\end{multline}
Since $(V_{ab})_{jk}=h^{(b)}_jW_{jk}h^{(a)}_k$, the flat-modulus condition $|W_{jk}|^2=1/d$ gives
\begin{multline}\label{eq:one-design-unitary-basis}
    \operatorname{Tr}(U_{ab}^\dagger U_{a'b'})=
    \sum_{j,k=1}^d\overline{(V_{ab})_{jk}}(V_{a'b'})_{jk}=\\
    \sum_{j,k=1}^d\overline{h^{(b)}_j}h^{(b')}_j
    \overline{h^{(a)}_k}h^{(a')}_k|W_{jk}|^2=\\
    \frac1d\left(\sum_j\overline{h^{(b)}_j}h^{(b')}_j\right)
    \left(\sum_k\overline{h^{(a)}_k}h^{(a')}_k\right)=\\
    \frac1d\left\langle\boldsymbol h^{(b)},\boldsymbol h^{(b')}\right\rangle
    \left\langle\boldsymbol h^{(a)},\boldsymbol h^{(a')}\right\rangle
    =d\delta_{aa'}\delta_{bb'}.
\end{multline}
Substituting this into the frame potential, we obtain
\begin{equation}\label{eq:one-design-exact-fp}
    \mathcal F^{(1)}_{\mathrm{e.d.}}
    =\frac1{d^4}\sum_{a,b,a',b'}
    \left|\operatorname{Tr}(U_{ab}^\dagger U_{a'b'})\right|^2
    =\frac{d^2\cdot d^2}{d^4}=1.
\end{equation}
Thus two phase layers with $2(d-1)$ programmable phase shifters realize an exact unitary $1$-design.

A single phase layer cannot realize such a design. Indeed, an ensemble of $U_i=W\Phi_iW$ would be a $1$-design if and only if the diagonal matrices $W^\dagger U_iW^\dagger=\Phi_i$ were a $1$-design. For $r\ne k$, their matrix elements obey $|(\Phi_i)_{rk}|^2=0$, whereas a $1$-design requires the average $1/d$. Thus $L\ge2$. Together with the construction above and $p=L(d-1)$, this proves Eq.~\eqref{eq:one-design-pmin} for the complete-layer architecture.

We finally show that no additional individual phase may be fixed in the two-layer layout. For an arbitrary member of this family, let
\begin{equation}
    V=\Phi_2W\Phi_1,\qquad
    V_{rk}=e^{i\theta_{2,r}}W_{rk}e^{i\theta_{1,k}},
\end{equation}
where $\theta_{1,d}=\theta_{2,d}=0$. If the original ensemble is a $1$-design, the ensemble of $V$ matrices is also a $1$-design, so
\begin{equation}\label{eq:one-design-first-moment}
    \frac1{|\mathcal K|}\sum_i (V_i)_{rk}\overline{(V_i)_{sl}}
    =\frac{\delta_{rs}\delta_{kl}}d.
\end{equation}
Suppose that an additional phase in the first layer is fixed, $\theta_{1,j}=c$ with $j\ne d$. Then
\begin{equation}
    V_{rj}\overline{V_{rd}}
    =e^{ic}W_{rj}\overline{W_{rd}},\qquad
    \left|e^{ic}W_{rj}\overline{W_{rd}}\right|=\frac1d.
\end{equation}
This product is independent of every remaining programmable phase. Its ensemble average is therefore nonzero, in contradiction with Eq.~\eqref{eq:one-design-first-moment}, which requires zero for $j\ne d$.

If instead $\theta_{2,j}=c$ is fixed in the second layer, the same contradiction follows from two entries of one column:
\begin{equation}
    V_{jr}\overline{V_{dr}}=e^{ic}W_{jr}\overline{W_{dr}}\ne0.
\end{equation}
Hence all $2(d-1)$ non-reference mode phases must remain programmable in this two-layer layout.
\end{proof}

To understand why a two-layer architecture ($L=2$) serves as a critical threshold for generating a unitary 1-design, it is instructive to consider the degrees of freedom required to parameterize a single-photon state.
Mathematically, the action of a unitary matrix $U$ on a single input photon corresponds to a single column of the matrix, which represents a normalized complex vector in $d$-dimensional space.
Defining an arbitrary complex state vector up to normalization and a global phase requires exactly $2d-2$ independent real parameters.
A single programmable layer ($L=1$) embedded between static DFT matrices provides only $d-1$ independent phase parameters, severely restricting the accessible single-particle states to a lower-dimensional subspace.
The introduction of the second layer ($L=2$) expands the parameter space of each individual matrix to exactly $2(d-1) = 2d-2$ degrees of freedom.
This value perfectly matches the dimensionality of the single-photon state space.
Physically, interleaving two programmable phase layers with balanced DFT mode-mixing provides the individual interferometer with the minimal expressive power necessary to map any input mode to an arbitrary coherent superposition at the output.
This capability is a fundamental prerequisite for the generated ensemble to uniformly cover the single-particle Hilbert space and perfectly satisfy the 1-design moment constraints without the need for a full quadratic $\mathcal{O}(d^2)$ universal parametrization.

This conclusion aligns with the results of Ref.~\cite{zelaya2025integrated}, which demonstrated that two random phase layers are sufficient to produce output optical signals with a white-noise profile, even for highly sparse input signals. In optical networks, such behavior reflects strong statistical mixing and redistribution of optical intensity across the output modes. This is closely related to the uniform ensemble-averaged intensities required to reproduce the first moment of the Haar measure. In this sense, a two-layer interferometer possesses the structural complexity needed to generate a broad class of output states.

Frame-potential minimization provides a way to search for ensembles subject to fixed hardware constraints. Appendix~\ref{app:dif} compares one numerical single-photon ensemble with standard qutrit Weyl operators. Visual differences do not establish inequivalence under fixed left and right unitary transformations, phases, and permutations of ensemble elements.

\subsection{Two-photon designs}

\begin{figure*}[t]
    \centering
    \includegraphics[width=0.85\textwidth]{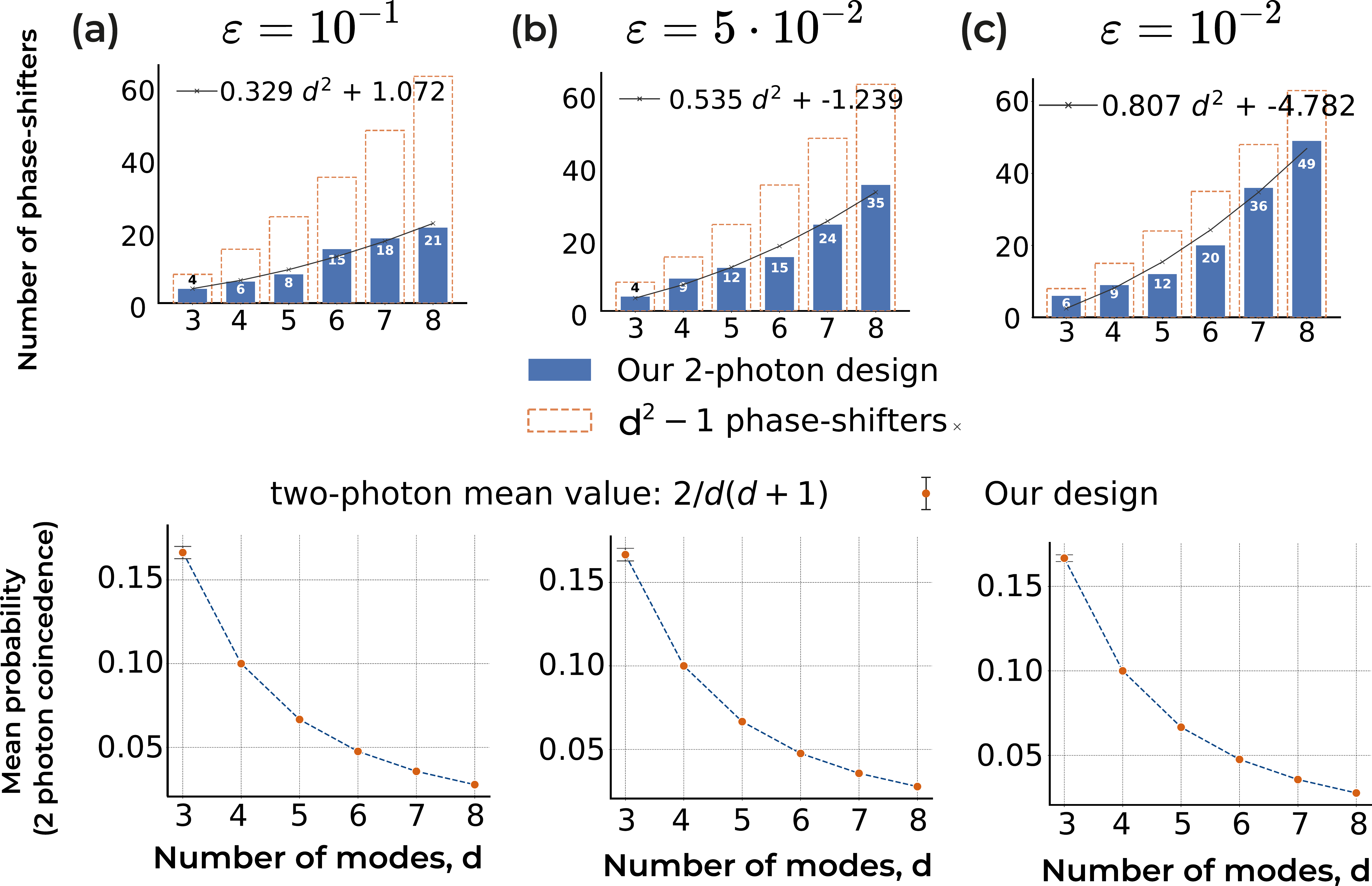}
\caption{Numerically selected interferometers for the two-photon loss $\mathrm{FPL}_{2\mathrm{ph}}(1)$ at thresholds (a) $10^{-1}$, (b) $5\cdot10^{-2}$, and (c) $10^{-2}$, for $d=3,\ldots,8$. The lower panels show simulated two-photon transition probabilities.}
    \label{fig:5}
\end{figure*}

In this case, Eq.~\eqref{eq:5} is no longer sufficient.
The reason is that, experimentally, one can directly access only probabilities, which are of the form $|\cdot|^2$ in the matrix elements of $U$.
Thus, in the $1$-design case, where we considered single-photon evolution under $U \in U(d)$, one does not obtain an observable whose average reproduces higher Haar moments for $t > 1$, because only one matrix element might be in $|\cdot|^2$ expression.
To access these higher moments, one has to consider multi-photon interference, and the simplest nontrivial setting is the two-photon case.
Two indistinguishable photons evolve in the symmetric space $\mathrm{Sym}^2(\mathbb C^d)$ of dimension $D_2=d(d+1)/2$. We denote the restriction of $U\otimes U$ to this space by $R_2(U)$. The evolution can be represented conveniently in the ambient tensor-product space in first-quantized form as

\begin{equation}{\label{eq:8}}
    \mathcal U^{(2)}=(U\otimes U)P_{\mathrm{sym}},
\end{equation}
\noindent where $P_{\mathrm{sym}}$ is the projector onto the symmetric subspace occupied by indistinguishable bosons. For two photons,
\begin{equation}{\label{eq:9}}
    P_{\mathrm{sym}}=\frac12(\mathbb I+\mathbb F),
\end{equation}
where $\mathbb I$ denotes the identity operator and $\mathbb F$ denotes the swap operator. Applying the frame-potential definition to $R_2$ gives

\begin{equation}{\label{eq:10}}
    \mathcal{F}^{(t),2\mathrm{ph}}_{\mu_H} = \frac{1}{4^{t}} \int_{U(d)} \left|\left( \mathrm{Tr} U\right)^2 + \mathrm{Tr} U^2 \right|^{2t} d\mu_H(U),
\end{equation}

\begin{equation}{\label{eq:11}}
    \mathcal{F}^{(t),2\mathrm{ph}}_{\mathrm{e.d.}} = \frac{1}{4^t |\mathcal{K}|^2} \sum_{i,j = 1}^{|\mathcal{K}|} \left| \left( \mathrm{Tr}(U_i U_j^{\dagger}) \right)^2 + \mathrm{Tr}\left( (U_i U_j^{\dagger})^2 \right) \right|^{2t}.
\end{equation}
These expressions can be derived using representation theory and, in particular, the character formalism discussed in Appendix~\ref{app:repr}. Here, $t$ denotes the frame-potential order in the two-photon representation $R_2$. At $t=1$, the polynomial degree in the entries of the mode matrix is determined by the photon number $n$: it is $n$ in the entries of $U$ and $n$ in their complex conjugates, giving a total degree of $2n$, as in the $n$-photon transition probabilities discussed in Ref.~\cite{matthews2015testing}. Thus, in the present two-photon setting, $t=1$ probes polynomials of balanced degree $(2,2)$, i.e., total degree $4$. For general $t$, the two-photon Haar integrand in Eq.~\eqref{eq:10} has balanced degree $(2t,2t)$.

We optimize $t=1$ in this representation. Since $R_2$ is irreducible, its Haar twirling channel is depolarizing. An exact two-photon design satisfies, for every operator $X$ on $\mathrm{Sym}^2(\mathbb C^d)$,
\begin{equation}\label{eq:bosonic-design}
\frac1{|\mathcal K|}\sum_i R_2(U_i)X R_2(U_i)^\dagger
=\frac{\operatorname{Tr}X}{D_2}\mathbb I_{D_2}.
\end{equation}

Equivalently, the induced matrices form a unitary $1$-design on $U(D_2)$, and their first frame potential is $1$. This guarantees the Haar averages of all two-photon transition probabilities. Throughout this section, the design condition is imposed on the two-photon space $\mathrm{Sym}^2(\mathbb C^d)$ through the induced transformation $R_2(U)$.

For consistency with the existing calculations, we use $|\mathcal K|=d^4-2d^2+2$ as a chosen ensemble size. The loss used in the simulations is

\begin{equation}{\label{eq:12}}
    \mathrm{FPL}_{2\mathrm{ph}}(1) = \left(\mathcal{F}^{(1), 2\mathrm{ph}}_{\mathrm{e.d.}} - 1 \right)^2.
\end{equation}

\begin{figure*}[t]
    \centering
    \includegraphics[width=\textwidth]{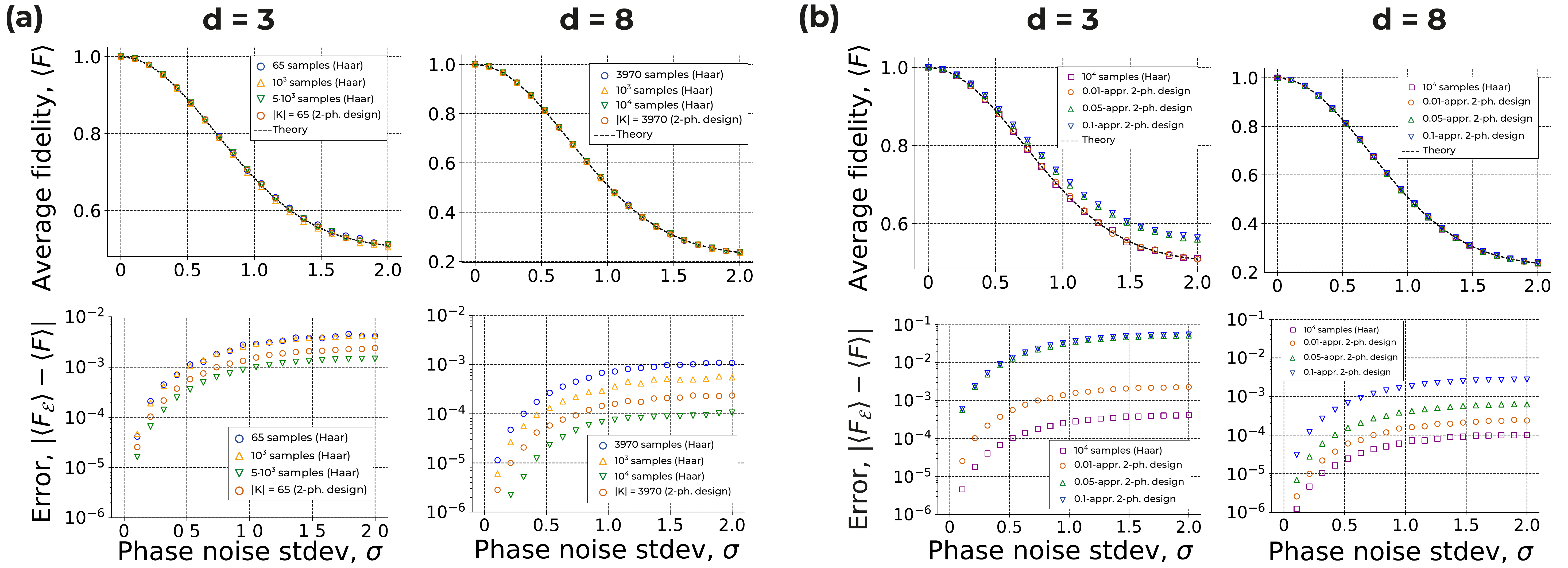}
\caption{Average-fidelity estimation for the noisy channel in Eq.~ \eqref{eq:13}. (a) Analytical Haar fidelity, finite Haar-sampling estimates, and estimates from the two-photon ensemble with loss threshold $0.01$, for $d=3$ and $8$, versus phase-noise standard deviation $\sigma$. (b) Errors for the two-photon ensembles of Fig.~\ref{fig:5}.}
    \label{fig:6}
\end{figure*}

The physical simulation protocol is analogous to the one used for the 1-design case, except that we now consider two photons injected into the first and second input modes and measure the probability of detecting them in the same output modes.
Without loss of generality, restricting our attention to these specific modes is sufficient, as arbitrary mode permutations can be straightforwardly implemented in linear optical networks.
The corresponding transition probability for a two-photon state is $P(|1_1, 1_2, 0, \dots, 0_d\rangle \rightarrow |1_1, 1_2, 0, \dots, 0_d\rangle) = |U_{11}U_{22}+U_{12}U_{21}|^{2}$, while its Haar average is $\mathbb{E}_{\mu_H}[|U_{11}U_{22}+U_{12}U_{21}|^2] = \frac{2}{d(d+1)}$, which can be understood as finding one possible outcome over $\binom{d+1}{2}$.
The optimization and physical-simulation results for different values of $\varepsilon$ are presented in Fig.~\ref{fig:5}.

This transition probability provides an experimentally accessible benchmark for the optimized two-photon ensemble.

As illustrated in the lower panels of Fig.~\ref{fig:5}, the optimized numerical ensembles successfully pass the physical simulation test for two-photon interference.
Across all evaluated dimensions and accuracy thresholds, the extracted mean probability of a two-photon coincidence perfectly aligns with the theoretical Haar expectation of $2/[d(d+1)]$.
The negligible error bars confirm that the generated $\varepsilon$-approximate two-photon designs consistently and reliably reproduce second-order polynomial observables, demonstrating that the structural complexity of these shallow interferometers is sufficient to mimic Haar-random multi-photon statistics.

Furthermore, the generated $\varepsilon$-approximate two-photon designs are structurally unique when compared to standard qubit-based constructions.
Conventional discrete $2$-designs, such as the multi-qubit Clifford group, rely on deterministic two-qubit entangling gates.
However, such deterministic non-linear operations are fundamentally impossible to implement in linear-optical networks without measurement-induced nonlinearities or auxiliary resources.
Instead, our frame-potential optimization algorithm discovers specialized phase configurations that leverage passive multi-photon interference (boson bunching) directly in the Fock space to satisfy the two-photon design constraints with prescribed error value $\varepsilon$.
This generative approach yields novel matrix ensembles that are distinct from known algebraic groups and are tailored to the physical constraints of linear optics.

These results show that constructing an interferometer capable of realizing a two-photon design is significantly more demanding than in the $1$-design case.
Each threshold value of $\varepsilon$ in Fig.~\ref{fig:5} imposes a corresponding lower requirement on the number of programmable parameters.
As expected, achieving a more accurate approximation to a two-photon design requires more tunable parameters.
Nevertheless, we are still able to construct $\varepsilon$-approximate two-photon designs with high accuracy, as confirmed by the small errors in the mean-probability test.

\section{Benchmark}\label{sec:benchmark}

\subsection{Average fidelity of a noisy channel}
A possible application of unitary $t$-designs in quantum information processing is the twirling of quantum channels \cite{nielsen2002simple, dankert2009exact, OliviaDiMatteo2021, lancien2020weak}.
Suppose we have a quantum channel $\mathcal{E}(\rho)$ that ideally implements a target unitary transformation $V$, but suffers from an induced phase shift on all modes.
Each phase error $\phi_k$ is an independent and identically distributed (i.i.d.) random variable following a normal distribution with zero mean and variance $\sigma^2$: $P(\boldsymbol{\phi}) = \prod_{k=1}^d P(\phi_k) = \prod_{k=1}^d \frac{1}{\sqrt{2\pi\sigma^2}} e^{-\frac{\phi_k^2}{2\sigma^2}}$.
The noisy channel is

\begin{equation}{\label{eq:13}}
    \mathcal{E}(\rho) = \int d\boldsymbol{\phi} P(\boldsymbol{\phi}) N(\boldsymbol{\phi}) V \rho V^\dagger N(\boldsymbol{\phi})^\dagger
\end{equation}
where $N(\boldsymbol{\phi}) = \mathrm{diag}\left(e^{i\phi_1}, e^{i\phi_2}, \dots, e^{i\phi_d}\right)$. The average channel fidelity is analytically given by

\begin{multline}{\label{eq:14}}
    \langle F_{\mathcal{E}}\rangle = \int_{U(d)} \bra{0} U^{\dagger} V^{\dagger}\ \mathcal{E} \left(U \ket{0} \bra{0} U^{\dagger}  \right) V U \ket{0}\ d\mu_H = \\
    =  \frac{2 + (d-1)e^{-\sigma^2}}{d+1}
\end{multline}
where this result is obtained using Weingarten calculus and derived in detail in Appendix~\ref{app:twirl}.

Since Eq.~\eqref{eq:14} contains two copies of $U$ and two copies of $U^\dagger$, a full mode-space unitary $2$-design is sufficient to replace the Haar integral by an ensemble average.
For this observable, the weaker two-photon condition in Eq.~\eqref{eq:bosonic-design} also suffices.
We estimate the fidelity with the optimized approximate ensembles as

\begin{equation}{\label{eq:15}}
    \langle F \rangle = \frac{1}{|\mathcal{K}|} \sum_{i = 1}^{|\mathcal{K}|}\ \bra{0} U_{i}^{\dagger} V^{\dagger} \mathcal{E} \left(U_i \ket{0} \bra{0} U_i^{\dagger} \right) VU_{i} \ket{0}.
\end{equation}

To numerically evaluate the action of the noisy channel \eqref{eq:13} during the simulation, the continuous integral over the phase errors $\boldsymbol{\phi}$ is approximated using Monte Carlo sampling, wherein the fidelity is averaged over $N = 10\,000$ independent random realizations of the phase noise vector.

Figure~\ref{fig:6}(a) compares the analytical Haar-averaged fidelity with estimates obtained from the optimized approximate two-photon designs at loss threshold $0.01$.
We plot the average fidelity and the absolute estimation error as a function of the phase noise standard deviation $\sigma$ for $d=3$ and $d=8$.
The ensemble estimate in Eq.~\eqref{eq:15} is compared directly with the analytical fidelity in Eq.~\eqref{eq:14}.
Haar-based estimates with several sample sizes provide finite-sampling baselines. In particular, we include a random Haar ensemble of the same cardinality as each generated two-photon design: $|\mathcal{K}|=65$ for $d=3$ and $|\mathcal{K}|=3970$ for $d=8$.
In the displayed benchmark, the optimized approximate two-photon design consistently outperforms a random Haar ensemble of the same cardinality, providing more accurate fidelity estimates.

This advantage of utilizing structured $t$-designs over unstructured Haar sampling strongly resonates with recent advancements in efficient quantum state learning, most notably the framework of classical shadows \cite{huang2020predicting}.
In the classical shadow formalism, randomized measurements drawn from discrete unitary $t$-designs (such as the multi-qubit Clifford group, which forms a 3-design) are employed to construct highly sample-efficient estimators for global quantum properties, including fidelity.
The theoretical power of classical shadows stems from the fact that $t$-designs guarantee a uniform coverage of the relevant operator space, strictly bounding the variance of the estimators.
Our numerical results demonstrate a continuous-variable optical analogue of this phenomenon: by numerically forcing the shallow interferometer ensemble to satisfy the exact two-photon design moment constraints, we effectively "derandomize" the channel twirling process.
This structured approach drastically suppresses the statistical fluctuations inherent to pure random Haar sampling, proving that hardware-efficient approximate two-photon designs can serve as powerful tools for sample-optimal fidelity estimation in linear optics.

Figure~\ref{fig:6}(b) illustrates the impact of the approximation parameter $\varepsilon$ by comparing a dense baseline of $10^4$ Haar samples against the family of $\varepsilon$-approximate two-photon designs ($\varepsilon = 0.01, 0.05, 0.1$) introduced in Fig.~\ref{fig:5}.
A clear hierarchy is observed: smaller values of $\varepsilon$ systematically lead to more accurate fidelity estimations, with the error saturating at a characteristic floor dictated by the approximation precision.
This behavior confirms that interferometers with more programmable parameters can be optimized to approximate the target two-photon design more closely, thus resolving the Haar-averaged fidelity with higher accuracy up to the fundamental limit imposed by $\varepsilon$.

\begin{figure}[t]
    \centering
    \includegraphics[width=\linewidth]{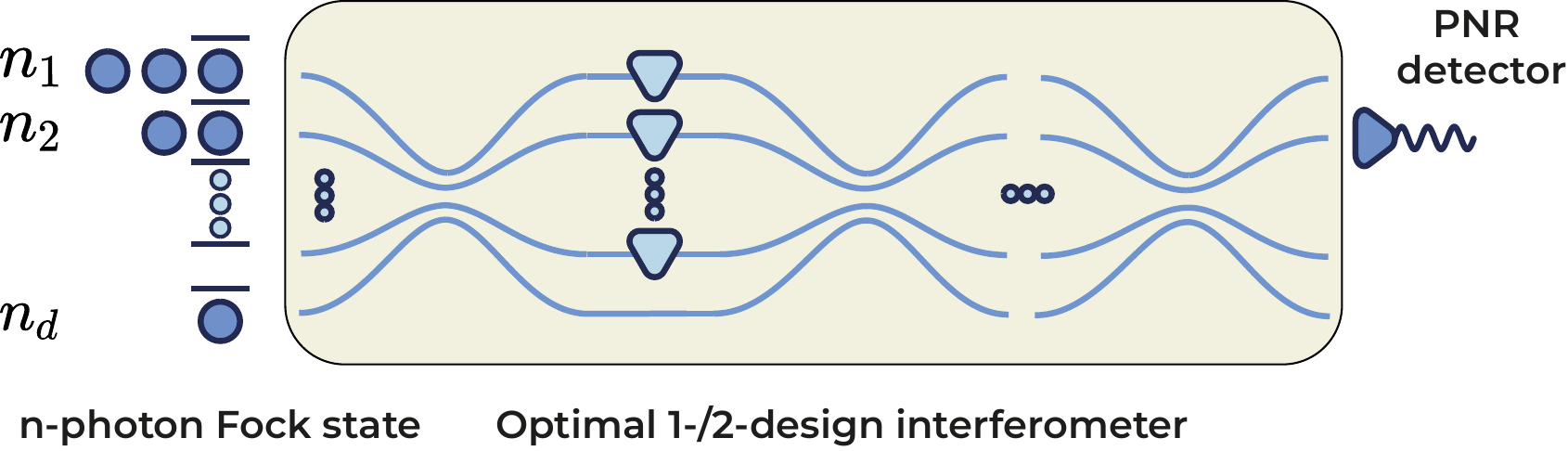}
    \caption{Benchmark for recovering the total photon number of a multi-photon Fock state $\ket{\psi} = \ket{n_1, n_2, \dots, n_d}$, where $\sum_{i =1}^{d}n_i = n$.}
    \label{fig:get_photons}
\end{figure}

\begin{figure}[t]
    \centering
    \includegraphics[width=\linewidth]{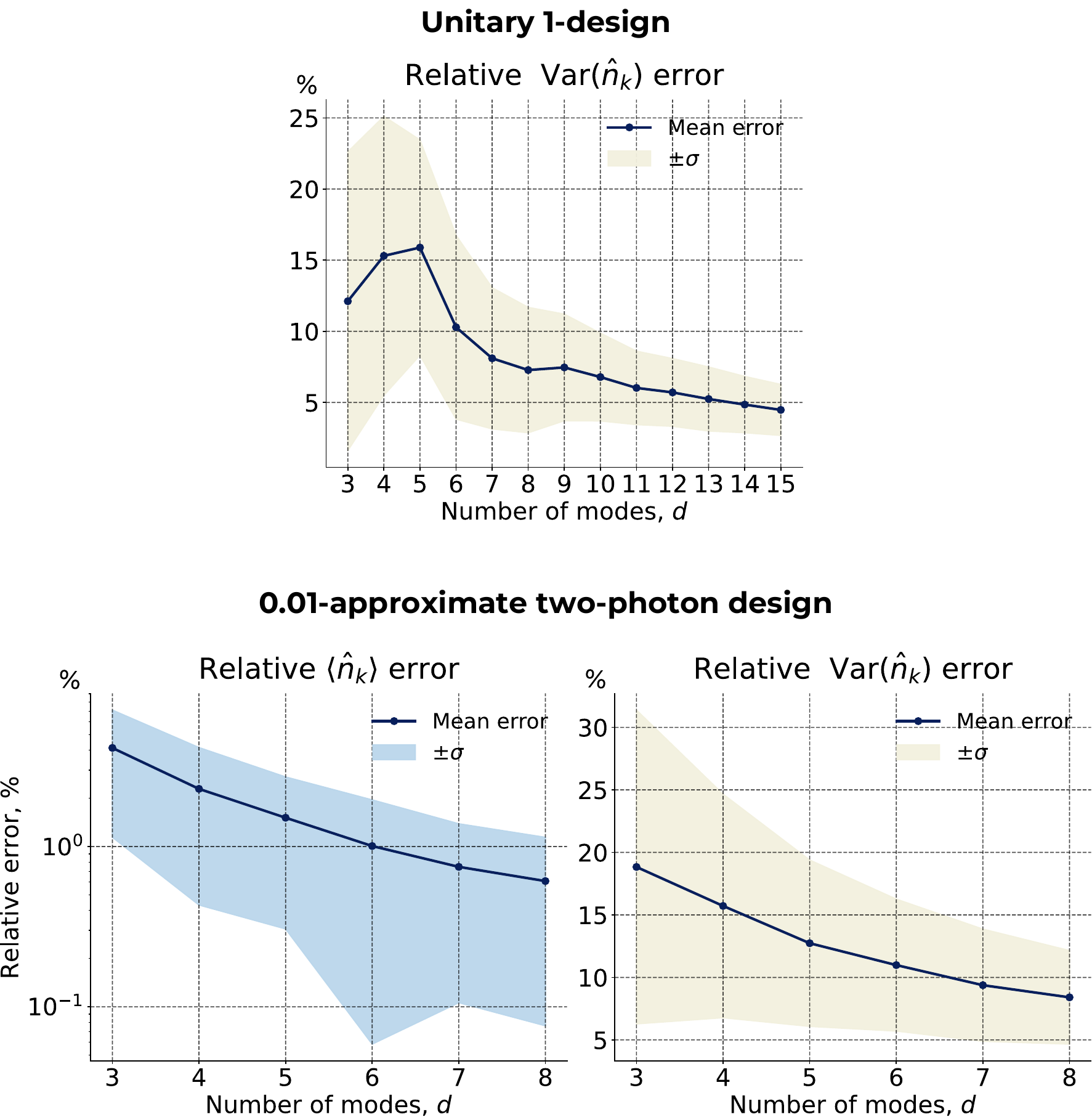}
    \caption{Relative errors in photon-number estimation. The upper panel shows the variance error obtained with the optimal unitary $1$-design for the lowest obtained $\varepsilon$ values from Fig.~\ref{fig:4}. The lower panels show the mean-occupation error (left) and variance error (right) for the $0.01$-approximate two-photon design.}
    \label{fig:rel_err}
\end{figure}

\subsection{Recovering the number of photons of multi-photon Fock state}

A novel practical problem that can be solved by the constructed optimal interferometers is the recovery of the total photon number using a single monitored output mode.
As illustrated in Fig.~\ref{fig:get_photons}, we consider an input state $\ket{\psi} = \ket{n_1, n_2, \dots, n_d}$ with total photon number $n = \sum_{i=1}^{d} n_i$, send it through an interferometer drawn from the ensemble $\mathcal{K}$, and measure the occupation number in a single output mode $k$ with a photon-number-resolving (PNR) detector.
Averaging over a mode-space unitary $1$-design gives the Haar-predicted first moment,

\begin{equation}{\label{eq:16}}
    \mathbb{E}_{\mu_H}[\langle \hat{n}_k\rangle ]  = \frac{n}{d}.
\end{equation}

To characterize the statistical fluctuations of this estimator, we consider the variance of photon counts when both the interferometer $U$ and the detection outcome are sampled. We denote this variance by $\mathrm{Var}_{\mu_H}(\hat n_k)$, while $\langle\cdot\rangle_U$ denotes the quantum expectation for a fixed $U$:

\begin{multline}\label{eq:17}
    \mathrm{Var}_{\mu_H}(\hat n_k)
    =\mathbb{E}_{\mu_H}\left[\left\langle\left(\hat n_k-\frac nd\right)^2\right\rangle_U\right] = \\
    =\mathbb{E}_{\mu_H}\left[\langle\hat n_k^2\rangle_U\right]
    -\left(\mathbb{E}_{\mu_H}\left[\langle\hat n_k\rangle_U\right]\right)^2
    =\frac{n(d-1)(n+d)}{d^2(d+1)},
\end{multline}

which can be reproduced by averaging over a unitary $2$-design ensemble $\mathcal{K}$; the derivation is given in Appendix~\ref{app:estim}.

The numerical results for this benchmark are presented in Fig.~\ref{fig:rel_err}. For each value of $d$, we perform the following procedure:
\begin{enumerate}
    \item Generate $10^3$ random Fock states with total photon number chosen uniformly from $1$ to $d$. Although the analytical results in Eqs.~\eqref{eq:16} and \eqref{eq:17} hold for arbitrarily large $n$ (including $n > d$), this specific range is chosen to maintain computational tractability.
    \item Evaluate the estimated mean occupation number and its variance for a selected output mode $k$ monitored by an ideal PNR detector capable of resolving an arbitrary number of photons.
    \item Compare the results with the theoretical predictions in Eqs.~\eqref{eq:16} and \eqref{eq:17}.
\end{enumerate}

The relative errors shown in Fig.~\ref{fig:rel_err} quantify how accurately the constructed ensembles reproduce the Haar-averaged first and second moments relevant to photon-number estimation.
For the exact unitary $1$-design, the mean occupation number equals $n/d$, so its relative error vanishes and the corresponding panel is omitted.
In contrast, the error in the variance remains significantly larger, showing that a unitary $1$-design is not sufficient to reproduce photon-number fluctuations, which depend on second moments of the ensemble.

The $0.01$-approximate two-photon design also does not yet provide the expected improvement in this benchmark and, in our simulations, leads to even larger relative errors. This suggests that photon-number estimation is a particularly demanding test of second-moment accuracy.

\section{Conclusion}\label{sec:discussion}

We studied the construction of unitary $1$-designs and two-photon designs on programmable linear-optical interferometers under the constraint of minimizing the number of tunable phase shifters.
Using the frame-potential loss as the optimization criterion, we identified interferometer architectures that realize unitary $1$-designs and approximate two-photon designs with substantially fewer programmable parameters than are required to implement a universal transformation in $SU(d)$, which requires $d^2-1$ independent phase-shifter parameters.

For the considered interferometer architecture, we established the exact minimum number of programmable phase shifters required to realize a unitary $1$-design, $p_{\mathrm{min}}=2(d-1)$, and provided an explicit construction attaining this bound. The numerical optimization results confirm this parameter count.

The ensembles were benchmarked using transition probabilities, average channel fidelity, and photon-number estimation.
Approximate two-photon designs reproduced Haar-averaged fidelities with good accuracy, showing that ensemble averages over a relatively small number of optimized matrices can outperform direct randomized sampling over large Haar-random ensembles.
At the same time, the novel photon-number-estimation experiment revealed a more stringent requirement: although a unitary $1$-design already reproduces first-moment observables well, accurately capturing photon-number fluctuations requires much better control of second moments.
This makes photon-number estimation a sensitive test of approximate two-photon design quality.

These results motivate optical moment designs tailored to specified experimental observables. A separate goal is to construct full unitary $2$-designs on a multiphoton Fock space. Passive linear optics alone cannot meet that exact condition; additional nonlinear operations, such as the SNAP gates discussed in Ref.~\cite{saied2024advancing}, provide a route toward universality. Proposals for photon--photon gates based on quantum emitters offer another relevant direction \cite{pettersson2026high}.

Finally, an important consideration for the experimental implementation of such shallow architectures is their significant reduction in optical propagation losses.
While minimizing the circuit depth reduces propagation losses and dynamic control complexity, it simultaneously removes the structural redundancy inherent in universal $\mathcal{O}(d^2)$ networks.
Consequently, systematic errors in the interferometer -- such as those arising from thermal crosstalk or imperfect static calibration of individual phase shifters -- will directly perturb the generated unitary matrices.
Such systematic deviations will inherently increase the frame potential loss, degrading the precision of the achieved approximate $t$-design.
Understanding the exact sensitivity of the frame potential to static hardware noise, and exploring the trade-off between circuit depth (redundancy) and resilience to calibration errors, represents a critical direction for future experimental characterization.

\section{Acknowledgments}

The work was supported by Rosatom in the framework of the Roadmap for Quantum computing (Contract № 868/1759-D dated 3 October 2025 and Contract № P2529 dated 29 December 2025).

\bibliography{reference}

\onecolumngrid
\appendix
\newpage
\section{Computing gradients of FPL$(t)$}\label{app:grad}
In \cite{kuzmin2021architecture}, exact gradient formulas were derived using a convenient factorized representation of $U_i$:

\begin{equation}\label{app_eq:1}
    U_i = A^{(i)}_{l} \Phi^{(i)}_{l} B^{(i)}_l,
\end{equation}
where

\begin{minipage}{0.45\textwidth}
      \begin{equation*}
        \begin{cases}
            A^{(i)}_{L} = W, \\
            A^{(i)}_{k} = A^{(i)}_{k+1}\, \Phi^{(i)}_{k+1}\, W,\\
            k = \overline{L-1, 1},
        \end{cases}
    \end{equation*}
\end{minipage}
\hfill
\begin{minipage}{0.45\textwidth}
    \begin{equation}\label{app_eq:3}
        \begin{cases}
            B^{(i)}_{1} = W, \\
            B^{(i)}_{k} = W\, \Phi^{(i)}_{k-1} B^{(i)}_{k-1},\\
            k = \overline{2, L}.
        \end{cases}
    \end{equation}
\end{minipage}

The gradient of $U_i$ with respect to the phase parameter $\theta^{(i)}_{l,m}$ is given by
\begin{equation}\label{app_eq:4}
    \frac{\partial U^{(i)}}{\partial \theta^{(i)}_{l,m}} = A^{(i)}_{l}\, \frac{\partial \Phi^{(i)}_l}{\partial \theta^{(i)}_{l,m}}\, B^{(i)}_{l}
    = A^{(i)}_{l}\, i e^{i\theta^{(i)}_{l,m}} \ket{m}\bra{m}\, B^{(i)}_{l} = i e^{i\theta^{(i)}_{l,m}}\, A^{(i)}_{l}\, \ket{m}\bra{m}\, B^{(i)}_{l},
\end{equation}
where $\ket{m}\bra{m}$ denotes a matrix whose only nonzero entry is the $(m,m)$ element, equal to $1$. The general expression for the gradient of the frame potential loss is
\begin{equation}\label{app_eq:5}
    \frac{\partial \mathrm{FPL}(t)}{\partial \theta^{(i)}_{l,m}} = 2 \left(\mathcal{F}^{(t)}_{\mathrm{e.d.}} - \mathcal{F}^{(t)}_{\mu_H}\right) \frac{\partial \mathcal{F}^{(t)}_{\mathrm{e.d.}}}{\partial \theta^{(i)}_{l,m}}.
\end{equation}

We first derive the expression for a unitary $1$-design. Define $G_{ij} = U_i U_j^{\dagger}$. Then
\begin{equation}\label{app_eq:6}
    \frac{\partial \mathcal{F}^{(1)}_{\mathrm{e.d.}}}{\partial \theta^{(i)}_{l,m}} = \frac{1}{|\mathcal{K}|^2} \sum_{j = 1}^{|\mathcal{K}|}
    \left( \frac{\partial \left|\mathrm{Tr} G_{ij}\right|^2}{\partial \theta^{(i)}_{l,m}} + \frac{\partial \left|\mathrm{Tr} G_{ji}\right|^2}{\partial \theta^{(i)}_{l,m}} \right) \\
    = \frac{4}{|\mathcal{K}|^2} \sum_{j = 1}^{|\mathcal{K}|} \Re \left[ \mathrm{Tr} G_{ji} \frac{\partial \mathrm{Tr} G_{ij}}{\partial \theta^{(i)}_{l,m}} \right],
\end{equation}
where we used the identity $\frac{\partial |z(x,y)|^2}{\partial x} = 2\Re\!\left(z^{\ast}\frac{\partial z}{\partial x}\right)$ for any complex-valued function $z$. Furthermore,
\begin{equation}\label{app_eq:7}
    \frac{\partial \mathrm{Tr} G_{ij}}{\partial \theta^{(i)}_{l,m}} = \mathrm{Tr} \left( \frac{\partial U_i}{\partial \theta^{(i)}_{l,m}} U_j^{\dagger} \right)
    = i e^{i\theta^{(i)}_{l,m}} \mathrm{Tr} \left(A^{(i)}_l \ket{m}\bra{m} B^{(i)}_{l} U_j^{\dagger}\right)
    = i e^{i\theta^{(i)}_{l,m}} \bra{m} B^{(i)}_l U_j^{\dagger} A^{(i)}_l \ket{m}.
\end{equation}

Using   $\operatorname{Re}(iz)=-\operatorname{Im}z$, Eqs.~\eqref{app_eq:5}--\eqref{app_eq:7}, and $M^{(1)}_{ij}=\operatorname{Tr}(U_jU_i^\dagger)U_j^\dagger$ , we obtain
  \begin{equation}\label{app_eq:8}
\frac{\partial\mathrm{FPL}(1)}{\partial\theta^{(i)}_{l,m}}
=-\frac{8(\mathcal F^{(1)}_{\mathrm{e.d.}}-1)}{|\mathcal K|^2}
\sum_j\operatorname{Im}\left[e^{i\theta^{(i)}_{l,m}}
\langle m|B_l^{(i)}M_{ij}^{(1)}A_l^{(i)}|m\rangle\right].
\end{equation}

  The same calculation applies to the first frame potential of the  two-photon   representation. The character derivative is
\begin{multline}\label{app_eq:9}
    \frac{\partial}{\partial \theta^{(i)}_{l,m}}\frac{1}{2} \left[(\mathrm{Tr} G_{ij})^2 + \mathrm{Tr} G_{ij}^{2} \right]
    = \mathrm{Tr} G_{ij}\, \mathrm{Tr} \frac{\partial G_{ij}}{\partial \theta^{(i)}_{l,m}} + \mathrm{Tr} \left( G_{ij} \frac{\partial G_{ij}}{\partial \theta^{(i)}_{l,m}} \right) \\
    = i e^{i \theta^{(i)}_{l,m}} \bra{m} B^{(i)}_{l} \left[ \mathrm{Tr}(G_{ij})\mathbb I + U_{j}^{\dagger} U_i \right] U_j^{\dagger} A^{(i)}_{l}\ket{m}.
\end{multline}

  Applying the chain rule to the loss in Eq.~\eqref{eq:12} gives
\begin{equation}\label{app_eq:10}
\frac{\partial\mathrm{FPL}_{2\mathrm{ph}}(1)}{\partial\theta^{(i)}_{l,m}}
=-\frac{8(\mathcal F^{(1),2\mathrm{ph}}_{\mathrm{e.d.}}-1)}{|\mathcal K|^2}
\sum_j\operatorname{Im}\left[e^{i\theta^{(i)}_{l,m}}
\langle m|B_l^{(i)}M_{ij}^{(2)}A_l^{(i)}|m\rangle\right],
\end{equation}
where
  \begin{equation}\label{app_eq:11}
    M^{(2)}_{ij} = \frac{1}{2} \left[(\mathrm{Tr} G_{ji})^2 + \mathrm{Tr} G_{ji}^{2} \right] \left[ (\mathrm{Tr} G_{ij})\mathbb I + U_{j}^{\dagger} U_i \right] U_{j}^{\dagger}.
\end{equation}

\section{Comparison of unitary 1-designs}\label{app:dif}
The   qudit Weyl operators are generated by \mbox{
\cite{anand2026qudit}
}\hskip0pt
\begin{eqnarray}
    && Z_d|j\rangle=\omega^j|j\rangle,\quad\omega=e^{2\pi i/d},\nonumber\\
    && X_d|j\rangle=|j+1\bmod d\rangle,\quad j\in\mathbb Z_d.
\end{eqnarray}
  Choosing representatives modulo scalar phase gives the unitary $1$-design
 \begin{equation}
    \mathcal{P}_d = \{X_d^{i}Z_d^{j}:\ i,j \in \mathbb{Z}_d \}.
\end{equation}

\begin{figure*}[ht!]
    \centering
    \includegraphics[width=\textwidth]{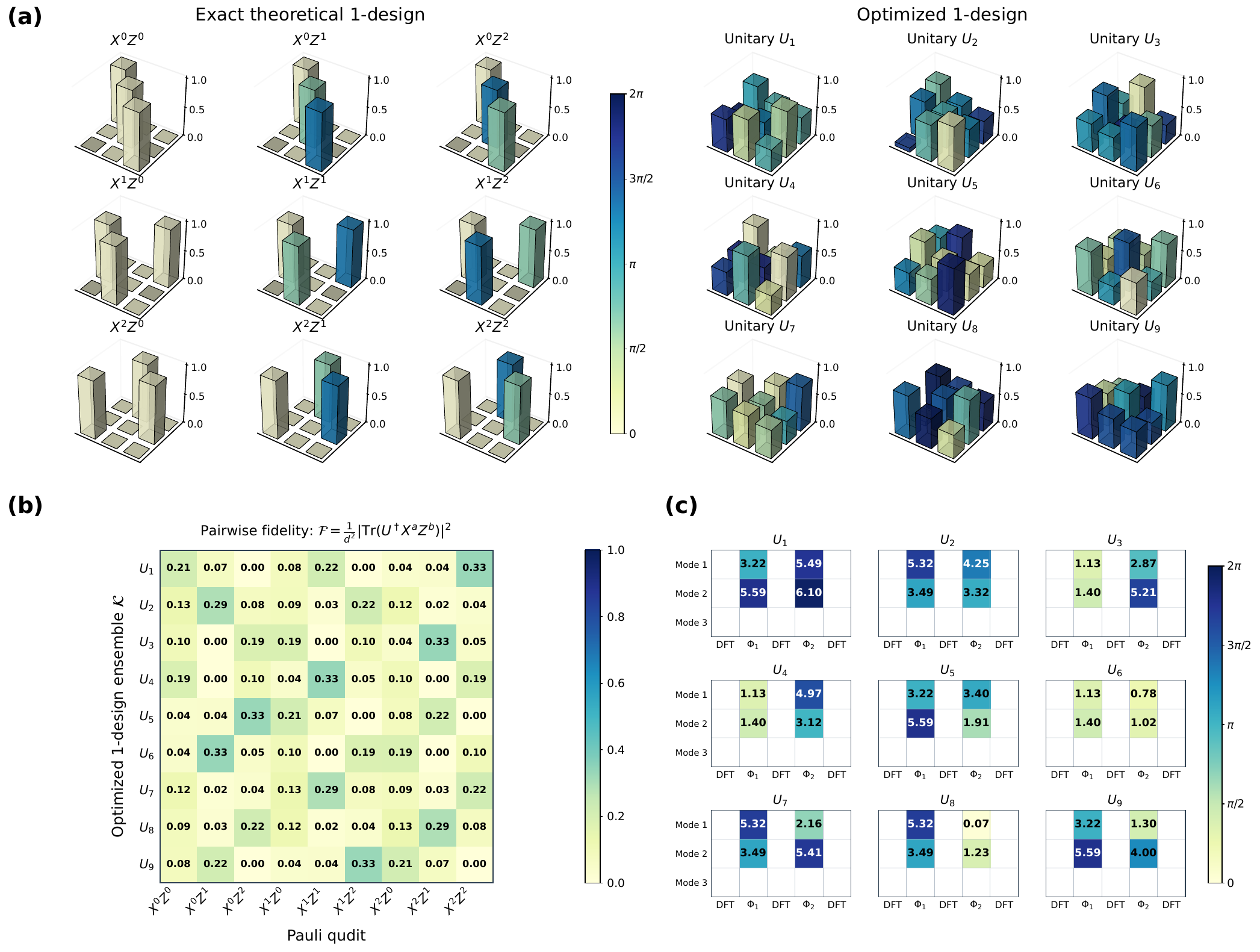}
    \caption{Exact and optimized unitary $1$-designs for $d=3$. (a) Weyl matrices $X^aZ^b$ (left) and optimized matrices $U_i$ with the smallest saved $\mathrm{FPL}(1)$ (right); bar heights and colors show moduli and phases. (b) Pairwise fidelities $|\operatorname{Tr}(U_i^\dagger X^aZ^b)|^2/9$. (c) Optimized phases in radians in the two-layer DFT circuit; the third-mode phase is fixed to zero in each layer.}
    \label{fig:3modes}
\end{figure*}

\FloatBarrier

\section{Representation theory}\label{app:repr}

  For a representation $A:G\to\mathrm{GL}(V)$, its character is the function \mbox{
\cite{fulton_representation}
}\hskip0pt
\begin{equation}\label{app_eq:12}
\chi_A(g)=\operatorname{Tr}A(g),\qquad g\in G.
\end{equation}

For   $n=1$, the character is $\chi_1(U)=\operatorname{Tr}U$. For $n=2$, the symmetric-power  character is

  \begin{equation}{\label{app_eq:13}}
    \chi_2(U) = \mathrm{Tr}\ R_2(U) = \frac{1}{2}\mathrm{Tr} \left(U \otimes U \left[\mathbb{I}+ \mathbb{F}\right] \right) = \frac{1}{2} \left[(\mathrm{Tr}U)^2 + \mathrm{Tr} U^2\right],
\end{equation}
where $\mathbb{I}$ is the identity permutation and $\mathbb{F}$ is the swap-operator, so $\mathrm{Tr} \left[(U \otimes U\ \mathbb{I})\right] = \mathrm{Tr} U \cdot \mathrm{Tr} U = (\mathrm{Tr} U)^2$, while $\mathrm{Tr} \left[(U \otimes U) \mathbb{F} \right] = \mathrm{Tr} U^2$, which follows from the swap-trick. Substituting this expression into the frame-potential formula, we obtain

\begin{equation}{\label{app_eq:14}}
    \mathcal{F}^{(t), 2ph}_{\mu_H} = \int_{U(d)} |\mathrm{Tr}\ \mathcal{U}^{(2)}|^{2t}\ d\mu_H(U) =  \int_{U(d)} |\chi_2(U)|^{2t}\ d\mu_H(U) = \frac{1}{4^{t}} \int_{U(d)} \left|(\mathrm{Tr}U)^2 + \mathrm{Tr}U^2 \right|^{2t}\ d\mu_H(U),
\end{equation}
  This is Eq.~ \eqref{eq:10}. For   three photons, the ambient projected operator is

  \begin{equation}{\label{app_eq:15}}
\widetilde{\mathcal U}^{(3)}
=\frac{1}{6}(U\otimes U\otimes U)\left[R_e+R_{(12)}+R_{(13)}+R_{(23)}+R_{(123)}+R_{(132)}\right].
\end{equation}
Here   $R_\pi$ permutes the three tensor factors. A transposition is a cycle of length two. The three transpositions are different operators, as are the two three-cycles. After taking the trace against $U^{\otimes3}$, permutations in the same conjugacy class contribute equally, giving
\begin{equation}
\chi_3(U)=\frac{1}{6}\left[(\operatorname{Tr}U)^3
+3\operatorname{Tr}U\operatorname{Tr}(U^2)
+2\operatorname{Tr}(U^3)\right].
\end{equation}

A useful tool in this context is the Newton--Girard identity, which allows one to compute the character for $n$ photons recursively:

\begin{equation}{\label{app_eq:16}}
    \chi_n(U)=\frac{1}n\sum_{k=1}^n\chi_{n-k}(U)\operatorname{Tr}(U^k),
    \qquad\chi_0(U)=1.
\end{equation}
This recursive relation makes it possible to evaluate frame potentials for arbitrary photon numbers and may be useful in further studies.

\section{Weingarten calculus}\label{app:wein}

The main tool for evaluating Haar integrals over matrix elements is the Weingarten formula derived in \cite{collins_sniady}:

\begin{equation}{\label{app_eq:17}}
        \int_{U(d)} U_{i_1, j_1} U_{i_2, j_2}\dots U_{i_{q}, j_{q}} U^{\dagger}_{l_1, m_1} U^{\dagger}_{l_2, m_2} \dots U^{\dagger}_{l_q, m_q} d\mu_H(U) = \sum_{\sigma, \tau \in S(q)} \delta_\sigma(\boldsymbol{i}, \boldsymbol{m}) \delta_{\tau} (\boldsymbol{j}, \boldsymbol{l}) \mathrm{Wg}(\sigma \tau^{-1}, d)
\end{equation}
  Here $\delta_\sigma(\boldsymbol i,\boldsymbol m)=\prod_{s=1}^q\delta_{i_s,m_{\sigma(s)}}$ and $\mathrm{Wg}$ is the unitary  Weingarten function \cite{weingarten_original}.
{
The character expansion of the Weingarten function can be evaluated using Young tableaux to describe the irreducible representations, the Murnaghan--Nakayama rule to compute symmetric-group characters, and Schur polynomials to obtain the corresponding unitary-group dimensions. For applications of Haar-moment methods to efficient unitary-design constructions, see Ref.~\cite{chen2024efficient}. Here we use only the second-moment case $q=2$, for $d\ge2$:
}

\begin{equation}{\label{app_eq:18}}
    \int_{U(d)} U_{i_1, j_1} U_{i_2, j_2} U^{\dagger}_{l_1, m_1} U^{\dagger}_{l_2, m_2} d\mu_H(U) = \mathrm{Wg}\left((1)(2), d\right) \left( \wg{2}{1, 2}{1, 2} + \wg{2}{2, 1}{2, 1} \right) + \mathrm{Wg}\left((12), d\right) \left( \wg{2}{2, 1}{1, 2} +  \wg{2}{1, 2}{2, 1} \right)
\end{equation}
where each graph represents the corresponding permutation contraction. For example, the graphical representations shown in Fig.~\ref{fig:wein_lol} correspond to the terms $\delta_{i_1, m_1} \delta_{i_2, m_2} \delta_{j_1, l_1}\delta_{j_2, l_2}$ and $\delta_{i_1, m_1} \delta_{i_2, m_2} \delta_{j_1, l_2}\delta_{j_2, l_1}$, respectively. Also, $\mathrm{Wg}\left((1)(2), d\right) = \frac{1}{d^2 - 1}$ and $\mathrm{Wg}\left((12), d\right) = -\frac{1}{d(d^2 - 1)}$, so

\begin{equation}{\label{app_eq:19}}
    \int_{U(d)} U_{i_1, j_1} U_{i_2, j_2} U^{\dagger}_{l_1, m_1} U^{\dagger}_{l_2, m_2} d\mu_H(U) = \frac{1}{d^2 - 1} \left( \wg{2}{1, 2}{1, 2} + \wg{2}{2, 1}{2, 1} \right) - \frac{1}{d(d^2 - 1)} \left( \wg{2}{2, 1}{1, 2} +  \wg{2}{1, 2}{2, 1} \right)
\end{equation}

\begin{figure}[h!]
  \centering
  \begin{minipage}{0.4\textwidth}
    \centering
    \drawWeingarten[$(1)(2)\rightarrow (1)(2): \pi = (1)(2)$]{2}{1, 2}{1, 2}
  \end{minipage}
  \hfill
  \begin{minipage}{0.4\textwidth}
    \centering
    \drawWeingarten[$(1)(2) \rightarrow (1 2): \pi = (12)$]{2}{1, 2}{2, 1}
  \end{minipage}
  \caption{Graphical representation of the two permutation-contraction patterns for two matrices $U$ and two matrices $U^{\dagger}$.}
  \label{fig:wein_lol}
\end{figure}

To obtain the value of $\mathcal{F}^{(1), 2\mathrm{ph}}_{\mu_H}$, we substitute $p_1 = \mathrm{Tr}\, U$ and $p_2 = \mathrm{Tr}\, U^2$ into \eqref{eq:10} for $t = 1$, which gives

\begin{equation}{\label{app_eq:20}}
    \mathcal{F}^{(1), 2\mathrm{ph}}_{\mu_H} = \frac{1}{4} \int_{U(d)} \left|p_1^2 + p_2 \right|^{2}\ d\mu_H(U) = \frac{1}{4} \int_{U(d)} \left(\underbrace{p_1^2 \overline{p_1}^2}_{I_1} + \underbrace{p_1^2 \overline{p_2}}_{I_2} + \underbrace{p_2 \overline{p_1}^2}_{I_3} + \underbrace{p_2 \overline{p_2}}_{I_4}\right) d\mu_H(U),
\end{equation}
where
\begin{eqnarray}{\label{app_eq:21}}
    &&I_1 = \int_{U(d)} p_1^2 \overline{p_1}^2\ d\mu_H(U) = \int_{U(d)} \left|\mathrm{Tr} U\right|^4 d\mu_H(U) = \mathcal{F}^{(2)}_{\mu_H} = 2 ,\nonumber\\
    &&I_2 = \int_{U(d)} p_1^2 \overline{p_2}\ d\mu_H = \sum_{i_1, i_2, j_1, j_2} \int_{U(d)}\ U_{i_1, i_1} U_{i_2, i_2} U^{\dagger}_{j_1, j_2} U^{\dagger}_{j_2, j_1}\ d\mu_H(U) = \nonumber\\
    && \quad \quad = \frac{2 \sum_{i_1, i_2, j_1, j_2} \delta_{i_1, i_2, j_1, j_2}}{d^2 - 1} - \frac{2 \sum_{i_1,j_1} \delta_{i_1, j_1} \sum_{i_2, j_2} \delta_{i_2, j_2}}{d(d^2 - 1)} = \frac{2d}{d^2 - 1} - \frac{2d^2}{d(d^2 - 1)}=0,\nonumber\\
    &&I_3 = \overline{I_2} = 0,\nonumber\\
    &&I_4 = \int_{U(d)} p_2 \overline{p_2}\ d\mu_H = \sum_{i_1, i_2, j_1, j_2} \int_{U(d)}\ U_{i_1, i_2} U_{i_2, i_1} U^{\dagger}_{j_1, j_2} U^{\dagger}_{j_2, j_1}\ d\mu_H(U) = \nonumber\\
    &&\quad \quad = \frac{2 \sum_{i_1,j_1} \delta_{i_1, j_1} \sum_{i_2, j_2} \delta_{i_2, j_2}}{d^2 - 1} - \frac{2 \sum_{i_1, i_2, j_1, j_2} \delta_{i_1, i_2, j_1, j_2}}{d(d^2 -1)} = \frac{2d^2}{d^2 - 1} - \frac{2d}{d(d^2 - 1)} = 2.
\end{eqnarray}
Therefore, we obtain $\mathcal{F}^{(1), 2\mathrm{ph}}_{\mu_H} = 1$.

\section{Twirling of quantum channels}\label{app:twirl}
The average gate fidelity of the channel from Eq. (\ref{eq:13}) with respect to $V$ is defined as the integral over all pure input states $|\psi\rangle = U|0\rangle$, uniformly distributed according to the Haar measure $d\mu_H(U)$:
\begin{equation}
    \langle F_{\mathcal{E}} \rangle = \int_{U(d)} \langle 0 | U^\dagger V^\dagger \mathcal{E}\left(U |0\rangle \langle 0| U^\dagger\right) V U |0\rangle d\mu_H(U).
\end{equation}
Substituting the channel definition and applying Fubini's theorem to swap the order of integration, we obtain
\begin{equation}
    \langle F_{\mathcal{E}} \rangle = \int d\boldsymbol{\phi} P(\boldsymbol{\phi}) \underbrace{\int_{U(d)} \left| \langle 0 | U^\dagger \Phi_{\boldsymbol{\phi}} U |0\rangle \right|^2 d\mu_H(U)}_{I_H(\boldsymbol{\phi})},
\end{equation}
where we have introduced the auxiliary unitary operator $\Phi_{\boldsymbol{\phi}} = V^\dagger N(\boldsymbol{\phi}) V$. Rewriting out the expression under the integral $I_H(\boldsymbol{\phi})$ component by component and letting the matrix elements of $\Phi_{\boldsymbol{\phi}}$ in the standard basis be $\Phi_{i,j}$:

\begin{equation}
    \left| \langle 0 | U^\dagger \Phi_{\boldsymbol{\phi}} U |0\rangle \right|^2 = \sum_{i_1, j_1, i_2, j_2} \Phi_{i_1, j_1} \Phi_{i_2, j_2}^* U_{i_1, 0}^* U_{j_1, 0} U_{i_2, 0} U_{j_2, 0}^*,
\end{equation}

so, using (\ref{app_eq:19}), we obtain

{
The four matrix elements in the integral have two unconjugated factors, $U_{j_1,0}$ and $U_{i_2,0}$, and two conjugated factors, $\overline{U_{i_1,0}}$ and $\overline{U_{j_2,0}}$. Equation~\eqref{app_eq:19} therefore gives
\begin{multline}
\int_{U(d)} U_{j_1,0}U_{i_2,0}
 \overline{U_{i_1,0}}\overline{U_{j_2,0}}\,d\mu_H(U)
=\left(\frac{1}{d^2-1}-\frac{1}{d(d^2-1)}\right)
 \left(\delta_{j_1,i_1}\delta_{i_2,j_2}
       +\delta_{j_1,j_2}\delta_{i_2,i_1}\right)=\\
\frac{\delta_{j_1,i_1}\delta_{i_2,j_2}
       +\delta_{j_1,j_2}\delta_{i_2,i_1}}{d(d+1)}.
\end{multline}
Substituting this integral and carrying out the Kronecker-delta contractions step by step, we find
\begin{multline}
I_H(\boldsymbol\phi)
=\sum_{i_1,j_1,i_2,j_2}\Phi_{i_1,j_1}\overline{\Phi_{i_2,j_2}}
 \int_{U(d)}U_{j_1,0}U_{i_2,0}
 \overline{U_{i_1,0}}\overline{U_{j_2,0}}\,d\mu_H(U)=\\
\frac{1}{d(d+1)}\sum_{i_1,j_1,i_2,j_2}
 \Phi_{i_1,j_1}\overline{\Phi_{i_2,j_2}}
 \left(\delta_{j_1,i_1}\delta_{i_2,j_2}
       +\delta_{j_1,j_2}\delta_{i_2,i_1}\right)
=\frac{1}{d(d+1)}\left[
 \sum_{i_1,i_2}\Phi_{i_1,i_1}\overline{\Phi_{i_2,i_2}}
 +\sum_{i_1,j_1}\Phi_{i_1,j_1}\overline{\Phi_{i_1,j_1}}\right]=\\
\frac{|\operatorname{Tr}\Phi_{\boldsymbol\phi}|^2
       +\sum_{i_1,j_1}|\Phi_{i_1,j_1}|^2}{d(d+1)}
=\frac{|\operatorname{Tr}\Phi_{\boldsymbol\phi}|^2
       +\operatorname{Tr}(\Phi_{\boldsymbol\phi}\Phi_{\boldsymbol\phi}^{\dagger})}{d(d+1)}
=\frac{|\operatorname{Tr}\Phi_{\boldsymbol\phi}|^2+d}{d(d+1)}.
\end{multline}

}

By the cyclic property of the trace, $\Tr(\Phi_{\boldsymbol{\phi}}) = \Tr(V^\dagger N(\boldsymbol{\phi}) V) = \Tr(N(\boldsymbol{\phi})) = \sum_{k=1}^d e^{i\phi_k}$. Therefore, the exact fidelity under a fixed realization of the phase noise is
\begin{equation}
    I_H(\boldsymbol{\phi}) = \frac{1}{d(d+1)} \left[ \left| \sum_{k=1}^d e^{i\phi_k} \right|^2 + d \right].
\end{equation}

Finally, we perform the statistical averaging over the Gaussian noise distribution:
\begin{equation}
    \mathbb{E}_{\boldsymbol{\phi}} \left[ \left| \sum_{k=1}^d e^{i\phi_k} \right|^2 \right] = \sum_{k=1}^d \sum_{m=1}^d \mathbb{E}\left[ e^{i(\phi_k - \phi_m)} \right] = d + \sum_{k \neq m} \mathbb{E}\left[ e^{i(\phi_k - \phi_m)} \right].
\end{equation}
For the off-diagonal terms ($k \neq m$), the difference $\Delta = \phi_k - \phi_m$ is a normally distributed random variable with zero mean and variance $2\sigma^2$. The expectation value of the exponential corresponds to the characteristic function of the normal distribution evaluated at $t=1$:
\begin{equation}
    \mathbb{E}\left[ e^{i\Delta} \right] = e^{-\frac{1}{2}\text{Var}(\Delta)} = e^{-\sigma^2}.
\end{equation}
Given that there are $d(d-1)$ off-diagonal pairs, the expectation value simplifies to
\begin{equation}
    \mathbb{E}_{\boldsymbol{\phi}} \left[ \left| \sum_{k=1}^d e^{i\phi_k} \right|^2 \right] = d + d(d-1)e^{-\sigma^2}.
\end{equation}
Substituting this statistical average back into the expression for $\langle F_{\mathcal{E}} \rangle$, we obtain
\begin{equation}
    \langle F_{\mathcal{E}} \rangle = \frac{d + d(d-1)e^{-\sigma^2} + d}{d(d+1)} = \frac{d \left( 2 + (d-1)e^{-\sigma^2} \right)}{d(d+1)}.
\end{equation}
Factoring out the dimension $d$, we arrive at the final analytical formula for the average gate fidelity:
\begin{equation}
    \langle F_{\mathcal{E}} \rangle = \frac{2 + (d-1)e^{-\sigma^2}}{d+1}.
\end{equation}

\section{Photon-number estimation}\label{app:estim}

The derivation of Eq.~\eqref{eq:16} is based on the transformation of creation and annihilation operators under a $d$-mode unitary matrix $U$
\begin{equation}{\label{app_eq:25}}
    \hat{a}^{\dagger}_{k} \rightarrow \sum_{i = 1}^{d} U_{ki}\ \hat{a}^{\dagger}_i, \quad \hat{a}_{k} \rightarrow \sum_{j = 1}^{d} U^{\dagger}_{jk}\ \hat{a}_{j}.
\end{equation}
So, substituting these transformations into \eqref{eq:16}, assuming the input state as $\ket{\psi} = \ket{n_1, n_2, \dots, n_d}$ with $\sum_{i} n_i = n$, we will obtain
\begin{multline}{\label{app_eq:26}}
    \mathbb{E}_{\mu_H} \left[ \langle \hat{n}_k \rangle \right] = \mathbb{E}_{\mu_H} \left[\sum_{i,j = 1}^{d} U_{ki} U^{\dagger}_{jk} \langle \hat{a}^{\dagger}_{i}\hat{a}_{j} \rangle \right] = \mathbb{E}_{\mu_H} \left[\sum_{i,j = 1}^{d} U_{ki} U_{jk}^{\dagger} n_{i} \delta_{ij} \right] = \sum_{i = 1}^{d} n_i \underbrace{\int_{U(d)} U_{ki} U_{ik}^{\dagger} d\mu_H(U)}_{1-\mathrm{design}}=\\
    = \sum_{i = 1}^{d} n_i \frac{1}{d} \delta_{kk} \delta_{ii} = \frac{n}{d}.
\end{multline}

For the following equation \eqref{eq:17}, we can rewrite
\begin{equation}{\label{app_eq:27}}
    \hat{n}_k^{2} = \hat{a}^{\dagger}_{k}\hat{a}_{k} \hat{a}^{\dagger}_k \hat{a}_k = (\hat{a}^{\dagger}_k)^2 \hat{a}_k^2 + \hat{n}_k,
\end{equation}
so we first evaluate the normally ordered contribution to the second moment at the output of the interferometer
\begin{multline}{\label{app_eq:28}}
    \mathbb{E}_{\mu_H} \left[\sum_{i, j, m,l = 1}^{d} U_{ki} U_{kj} U^{\dagger}_{mk} U^{\dagger}_{lk} \langle \hat{a}^{\dagger}_i \hat{a}^{\dagger}_j \hat{a}_m \hat{a}_l\rangle \right] = \\
    = \mathbb{E}_{\mu_H} \left[\sum_{i,j,m,l = 1}^{d} U_{ki} U_{kj} U^{\dagger}_{mk} U^{\dagger}_{lk} \left( n_i n_j\underbrace{(\delta_{il} \delta_{jm} + \delta_{im} \delta_{jl})}_{i \neq j} + (n_i^2 - n_i) \delta_{ijml}\right)\right] =  \\
    = 2 \mathbb{E}_{\mu_H} \left[\sum_{i\neq j}^{d} n_i n_j U_{ki} U_{kj} U^{\dagger}_{jk} U^{\dagger}_{ik}\right] + \mathbb{E}_{\mu_H} \left[\sum_{i = 1}^{d} (n_i^2 - n_i) U_{ki} U_{ki} U^{\dagger}_{ik} U^{\dagger}_{ik} \right] =  2 \sum_{i\neq j}^{d} n_i n_j \underbrace{\int_{U(d)} U_{ki} U_{kj} U^{\dagger}_{jk} U^{\dagger}_{ik}\ d\mu_H(U)}_{2-\mathrm{design}}  +  \\
    + \sum_{i = 1}^{d} (n_i^2 - n_i) \underbrace{\int_{U(d)}U_{ki} U_{ki} U^{\dagger}_{ik} U^{\dagger}_{ik}\ d\mu_H(U)}_{2-\mathrm{design}} = 2\sum_{i\neq j}^{d} n_i n_j\delta_{kk} \delta_{ii} \delta_{kk} \delta_{jj} \left[\frac{1}{d^2 - 1}  - \frac{1}{d(d^2 - 1)} \right] + \\
    + 2\sum_{i = 1}^{d} (n_i^2 - n_i) \delta_{kk} \delta_{ii} \left[\frac{1}{d^2 - 1}- \frac{1}{d(d^2 - 1)}\right] = \\
    = \frac{2}{d(d+1)} \sum_{i = 1}^{d} n_i (n - n_i) + \frac{2}{d(d+1)} \left[ \sum_{i = 1}^{d} n_i^2 - \sum_{i = 1}^{d} n_i \right] = \frac{2}{d(d+1)} \left(n^2 - n \right)
\end{multline}

Consequently, the Haar-averaged second moment is
\begin{equation}\label{app_eq:number_second_moment}
    \mathbb{E}_{\mu_H}\left[\langle\hat n_k^2\rangle_U\right]
    =\frac{2n(n-1)}{d(d+1)}+\frac nd.
\end{equation}

The variance in Eq.~\eqref{eq:17} refers to photon counts pooled over the sampled interferometers and detection outcomes. It contains both the quantum variance at fixed $U$ and the fluctuations of the quantum mean between different $U$. To evaluate these contributions explicitly, we first square the quantum mean at fixed $U$ and then perform the Haar average. Using $\langle\hat n_k\rangle_U=\sum_{i=1}^d n_i U_{ki}U^\dagger_{ik}$, we obtain
\begin{multline}\label{app_eq:number_mean_squared}
    \mathbb{E}_{\mu_H}\left[\langle\hat n_k\rangle_U^2\right]
    =\mathbb{E}_{\mu_H}\left[\left(\sum_{i=1}^d n_i U_{ki}U^\dagger_{ik}\right)^2\right] = \\
    =\mathbb{E}_{\mu_H}\left[\sum_{i,j=1}^d n_i n_j U_{ki}U^\dagger_{ik}U_{kj}U^\dagger_{jk}\right] = \\
    =\sum_{i,j=1}^d n_i n_j
    \underbrace{\int_{U(d)}U_{ki}U_{kj}U^\dagger_{ik}U^\dagger_{jk}\,d\mu_H(U)}_{2-\mathrm{design}} = \\
    =\sum_{i,j=1}^d n_i n_j\frac{1+\delta_{ij}}{d(d+1)}
    =\frac{n^2+\sum_{i=1}^d n_i^2}{d(d+1)}.
\end{multline}

Thus, the Haar average of the quantum variance $\mathrm{Var}_U(\hat n_k)=\langle\hat n_k^2\rangle_U-\langle\hat n_k\rangle_U^2$ is
\begin{multline}\label{app_eq:conditional_variance}
    \mathbb{E}_{\mu_H}\left[\mathrm{Var}_U(\hat n_k)\right]
    =\frac{2n(n-1)}{d(d+1)}+\frac nd
    -\frac{n^2+\sum_{i=1}^d n_i^2}{d(d+1)} = \\
    =\frac{n(d-1)+n^2-\sum_{i=1}^d n_i^2}{d(d+1)}.
\end{multline}
The fluctuations of the quantum mean around the Haar mean $n/d$ contribute
\begin{multline}\label{app_eq:mean_fluctuations}
    \mathbb{E}_{\mu_H}\left[\left(\langle\hat n_k\rangle_U-\frac nd\right)^2\right]
    =\mathbb{E}_{\mu_H}\left[\langle\hat n_k\rangle_U^2\right]-\frac{n^2}{d^2} = \\
    =\frac{n^2+\sum_{i=1}^d n_i^2}{d(d+1)}-\frac{n^2}{d^2}
    =\frac{d\sum_{i=1}^d n_i^2-n^2}{d^2(d+1)}.
\end{multline}

Adding these two contributions gives the law of total variance. Equivalently, we can keep the quantum expectation inside the Haar average and center the count operator at the ensemble mean $n/d$:
\begin{multline}\label{app_eq:29}
    \mathrm{Var}_{\mu_H}(\hat n_k)
    =\mathbb{E}_{\mu_H}\left[\mathrm{Var}_U(\hat n_k)\right]
    +\mathbb{E}_{\mu_H}\left[\left(\langle\hat n_k\rangle_U-\frac nd\right)^2\right] = \\
    =\mathbb{E}_{\mu_H}\left[\left\langle\left(\hat n_k-\frac nd\right)^2\right\rangle_U\right]
    =\mathbb{E}_{\mu_H}\left[\langle\hat n_k^2\rangle_U\right]
    -\left(\mathbb{E}_{\mu_H}\left[\langle\hat n_k\rangle_U\right]\right)^2 = \\
    =\frac{2n(n-1)}{d(d+1)}+\frac nd-\frac{n^2}{d^2}
    =\frac{n(d-1)(n+d)}{d^2(d+1)}.
\end{multline}
The dependence on the individual input occupations $n_i$ cancels between the two contributions, so the final result depends only on the total photon number $n$ and the number of modes $d$.

\end{document}